\newcommand\myshade{70} 
\newcommand{\bea}{\begin{eqnarray}}
\newcommand{\eea}{\end{eqnarray}}
\newcommand{\bean}{\begin{eqnarray*}}
\newcommand{\eean}{\end{eqnarray*}}
\newcommand{\ceil}[1]{\left\lceil #1 \right\rceil}
\newcommand{\floor}[1]{\left\lfloor #1 \right\rfloor}
\newcommand{\sbinom}[2]{\left( \begin{array}{c} #1 \\ #2 \end{array} \right) }
\newcommand{\field}[1]{\mathbb{#1}}
\newcommand{\F}{\field{F}}
\newcommand{\cC}{{\cal C}}
\newcommand{\cL}{{\cal L}}
\newcommand{\cO}{{\cal O}}
\newcommand{\cS}{{\cal S}}
\newcommand{\sG}{\script{G}}
\newcommand{\sP}{\script{P}}
\newcommand{\bfc}{{\boldsymbol c}}
\newcommand{\bfu}{{\boldsymbol u}}
\newcommand{\bfv}{{\boldsymbol v}}
\newcommand{\bfw}{{\boldsymbol w}}
\newcommand{\bfy}{{\boldsymbol y}}
\newcommand{\bfz}{{\boldsymbol z}}
\newcommand{\FF}{\mathbb{F}}
\DeclareMathAlphabet{\mathbfsl}{OT1}{cmr}{bx}{it}
\newcommand{\uuu}{\kern-1pt\mathbfsl{u}\kern-0.5pt}
\newcommand{\vvv}{\kern-1pt\mathbfsl{v}\kern-0.5pt}
\newcommand{\myboxplus}{\kern1pt\mbox{\small$\boxplus$}}
\makeatletter \DeclareRobustCommand{\sbinom}{\genfrac[]\z@{}}
\newcommand{\G}[2]{\sbinom{{#1}\kern-1pt}{{#2}\kern-1pt}}
\newcommand{\Gq}[2]{\sbinom{{#1}\kern-0.25pt}{{#2}\kern-0.25pt}}
\newcommand{\Fq}{\smash{{\mathbb F}_{\!q}}}
\newcommand{\Ps}{\smash{{\sP\kern-2.0pt}_q\kern-0.5pt(n)}}
\newcommand{\sPs}{\smash{{\sP\kern-1.5pt}_q(n)}}
\newcommand{\Ptwo}{\smash{{\sP\kern-2.0pt}_2\kern-0.5pt(n)}}
\newcommand{\Ptwom}{\smash{{\sP\kern-2.0pt}_2\kern-0.5pt(m)}}
\newcommand{\Ptwonm}{\smash{{\sP\kern-2.0pt}_2\kern-0.5pt(n+m)}}
\newcommand{\Ptwoa}{\smash{{\sP\kern-2.0pt}_2\kern-0.5pt(1)}}
\newcommand{\Ptwob}{\smash{{\sP\kern-2.0pt}_2\kern-0.5pt(2)}}
\newcommand{\Ptwoc}{\smash{{\sP\kern-2.0pt}_2\kern-0.5pt(3)}}
\newcommand{\Ptwod}{\smash{{\sP\kern-2.0pt}_2\kern-0.5pt(4)}}
\newcommand{\Ptwoe}{\smash{{\sP\kern-2.0pt}_2\kern-0.5pt(5)}}
\newcommand{\Ptwof}{\smash{{\sP\kern-2.0pt}_2\kern-0.5pt(6)}}
\newcommand{\Ptwokm}{\smash{{\sP\kern-2.0pt}_2\kern-0.5pt(2k-1)}}
\newcommand{\Pone}{\smash{{\sP\kern-2.5pt}_2\kern-0.5pt(n{-}1)}}
\newcommand{\Gr}{\smash{{\sG\kern-1.5pt}_q\kern-0.5pt(n,k)}}
\newcommand{\Gi}{\smash{{\sG\kern-1.5pt}_q\kern-0.5pt(n,i)}}
\newcommand{\Gj}{\smash{{\sG\kern-1.5pt}_q\kern-0.5pt(n,j)}}
\newcommand{\Grmk}{\smash{{\sG\kern-1.5pt}_q\kern-0.5pt(n,n-k)}}
\newcommand{\Grdk}{\smash{{\sG\kern-1.5pt}_q\kern-0.5pt(2k,k)}}
\newcommand{\Grekappa}{\smash{{\sG\kern-1.5pt}_q\kern-0.5pt(n,e+1-\kappa)}}
\newcommand{\Grtwoekappa}{\smash{{\sG\kern-1.5pt}_q\kern-0.5pt(n,2e+1-\kappa)}}
\newcommand{\Gremkappa}{\smash{{\sG\kern-1.5pt}_q\kern-0.5pt(n,e-\kappa)}}
\newcommand{\Gn}{\smash{{\sG\kern-1.5pt}_2\kern-0.5pt(n,n{-}1)}}
\newcommand{\Gnq}{\smash{{\sG\kern-1.5pt}_q\kern-0.5pt(n,n{-}1)}}
\newcommand{\Gone}{\smash{{\sG\kern-1.5pt}_2\kern-0.5pt(n,1)}}
\newcommand{\Gqone}{\smash{{\sG\kern-1.5pt}_q\kern-0.5pt(n,1)}}
\newcommand{\GTwo}{\smash{{\sG\kern-1.5pt}_2\kern-0.5pt(n,k)}}
\newcommand{\GTwonk}[2]{{\smash{{\sG\kern-1.5pt}_2\kern-0.5pt({#1},{#2})}}}
\newcommand{\Gnk}{\smash{{\sG\kern-1.5pt}_2\kern-0.5pt(n,n{-}k)}}
\newcommand{\Greone}{\smash{{\sG\kern-1.5pt}_q\kern-0.5pt(n,e{+}1)}}
\newcommand{\Gretwo}{\smash{{\sG\kern-1.5pt}_q\kern-0.5pt(n,e{+}2)}}
\newcommand{\be}[1]{\begin{equation}\label{#1}}
\newcommand{\ee}{\end{equation}}
\newtheorem{theorem}{Theorem}
\newtheorem{lemma}{Lemma}
\newtheorem{remark}{Remark}
\newtheorem{claim}{Claim}
\newtheorem{definition}{Definition}
\newtheorem{proposition}{Proposition}
\begin{document}

% \author{Shubhransh Singhvi, Roni Con, Han Mao Kiah, Eitan Yaakobi} 

\author{\IEEEauthorblockN{ \textbf{Shubhransh~Singhvi}\IEEEauthorrefmark{1}, \textbf{Roni~Con}\IEEEauthorrefmark{3}, \textbf{Han~Mao~Kiah}\IEEEauthorrefmark{4} and \textbf{Eitan~Yaakobi}\IEEEauthorrefmark{3}\IEEEauthorrefmark{4}}
  \IEEEauthorblockA{\IEEEauthorrefmark{1}% 
  Signal Processing  \&  Communications Research  Center, IIIT Hyderabad, India}
\IEEEauthorblockA{\IEEEauthorrefmark{3}%
                     Department of Computer Science, %\\
                     Technion---Israel Institute of Technology, 
                     Haifa 3200003, Israel}
  \IEEEauthorblockA{\IEEEauthorrefmark{4}%
                     School of Physical and Mathematical Sciences, 
		Nanyang Technological University, Singapore}
 }

%\title{\textbf{Optimal Sequence Reconstruction Algorithms for Reed-Solomon Codes}}
\title{\textbf{An Optimal Sequence Reconstruction Algorithm for Reed-Solomon Codes}}
\date{\today}
 \maketitle
\thispagestyle{empty}	
%\pagestyle{empty}
%%%%%%%%
\hspace*{-3mm}\begin{abstract}
The {\em sequence reconstruction problem}, introduced by Levenshtein in 2001, considers a scenario where the sender transmits a codeword from some codebook, and the receiver obtains $N$ noisy outputs of the codeword. 
We study the problem of {\em efficient} 
%reconstruction using $N$ outputs that are each corrupted by at most $t$ substitutions. 
reconstruction using $N$ outputs that are corrupted by substitutions. 
Specifically, for the ubiquitous Reed-Solomon codes, we adapt the Koetter-Vardy soft-decoding algorithm, presenting a reconstruction algorithm capable of correcting beyond Johnson radius. 
Furthermore, the algorithm uses $\cO(nN)$ field operations, where $n$ is the codeword length. %\ey{1. Is it Johnson's radius or Johnson radius, 2. I prefer not writing $\cO(nN)$ time because it ignores the field size, maybe "time complexity of $\cO(nN)$ field operations"? } \HM{It is Johnson radius. Sorry about that. Yes, let's write it in terms of field operations.}
\end{abstract}

\section{Introduction}
%%%%%%%%

The \emph{sequence reconstruction} problem introduced by Levenshtein \cite{L01A, L01B} corresponds to a model in which a sequence from some codebook is transmitted over several noisy channels. 
The channels are assumed to be independent, except it is required that their outputs are different. The main problem under this paradigm has been to determine the minimum number of channels required to uniquely reconstruct the transmitted sequence. Levenshtein proved that for unique reconstruction, the number of channels in the worst case has to be greater than the maximum intersection size between two balls of any possible two inputs.
Here, the ball of an input refer to all possible channel outputs of the specific input.
%, where the balls are all possible channel outputs.    

Also, of interest is the task to design an {\em efficient} decoder that correctly reconstructs a codeword from these noisy outputs. 
While Levenshtein introduced this problem in his seminal work \cite{L01A},
efficient decoders are less studied and to the best of authors' knowledge, only \cite{YB18, AY21, PGK22, JLL23} designed efficient decoders.
%study the task of constructing a reconstruction decoder that outputs the correct word. 
%\sh{HM can you please add the work you had mentioned}
%\HM{Added the reference. However, I noticed that the references have no order.}

This problem was first motivated by the fields of biology and chemistry, however it is also relevant for applications in wireless sensor networks. Recently this model has received significant attention due to its applicability to DNA storage, where the same information is read multiple times and thereby several channel estimations of the data are provided~\cite{CGK12,Getal13,YKGMZM15}. Solving the reconstruction problem was studied in~\cite{L01A} with respect to several channels such as the Hamming distance, the Johnson graphs, and other metric distances. In~\cite{K08,K07,KLS07}, it was analyzed for permutations, and in~\cite{LKKM08,LS09} for other general error graphs. Later, the problem was studied in~\cite{YSLB13} for permutations with the Kendall's $\tau$ distance and the Grassmann graph, and in~\cite{GY16, SGSD17, PGK22 } for insertions and deletions, respectively. The connection between the reconstruction problem and associative memories was proposed in~\cite{YB18} and more results were then derived in~\cite{JL14,JL15,JL16}. This problem was also studied in~\cite{JV04} for the purpose of asymptotically improving the Gilbert--Varshamov bound.

In this work, we consider the $t$-substitution model, where every channel introduces at most $t$ substitution errors. To describe the problem formally, we introduce some notation. For $\bfv\in \mathbb{F}_q^n$, let $B_t(\bfv)$ be the radius-$t$ ball surrounding a word $\bfv$.  
Assume that the transmitted words belong to some code $\cC \subseteq \mathbb{F}_q^n$ with minimum distance $d$. 
Denote by $N_{n,q}(t,d) \triangleq \max_{\bfc_1,\bfc_2\in\cC} \left\vert B_t(\bfc_1) \cap B_t(\bfc_2) \right\vert$, the maximum intersection between two balls of radius $t$ of any possible pair of codewords.
In general, computing $N_{n,q}(t,d)$, is not straightforward and in some settings the exact value is not known. However, for substitution errors, Levenstein computed this quantity

% Let the size of the maximum intersection between two balls of any possible pair of transmitted words be denoted by 
% $N_{n,q}(t,d) \triangleq \underset{\bfc_1,\bfc_2\in\cC}{\max}~\left\vert B_t(\bfc_1) \cap B_t(\bfc_2) \right\vert$.

\begin{lemma}[\hspace*{-1.5mm}\cite{L01A}]\label{seq_recon_bound}
Let $e\triangleq \floor{\frac{d-1}{2}}$ and $t \triangleq e + \ell$. Then 
\begin{align*}
&N_{n,q}(t, d)\\ &=\sum_{i=0}^{t-\left\lceil\frac{d}{2}\right\rceil}\binom{n-d}{i}(q-1)^i  \cdot \\
&\quad \left( \sum_{a=d-t+i}^{t-i}~\sum_{b=d-t+i}^{t-i}\binom{d}{a}\binom{d-a}{b}(q-2)^{d-a-b}\right)\;.
\end{align*}

\end{lemma}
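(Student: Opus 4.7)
The plan is to first reduce to a pair of codewords at minimum distance, and then count the intersection by classifying each coordinate of a word $\bfx$ in the intersection according to how it compares to the two centres.

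\textbf{Step 1: reduction to minimum distance.} By translation invariance of the Hamming metric, $|B_t(\bfc_1)\cap B_t(\bfc_2)|$ depends only on $s\triangleq d(\bfc_1,\bfc_2)$; call this value $f_t(s)$. I would argue that $f_t$ is non-increasing on $\{0,1,\dots,2t\}$, so that the maximum in the definition of $N_{n,q}(t,d)$ is realised at any pair attaining the minimum distance $d$ (such a pair exists by definition of $d$). The cleanest route is to carry out the counting of Step~2 for arbitrary $s$ in place of $d$, obtain a generic closed form for $f_t(s)$, and check monotonicity by comparing consecutive terms; alternatively one can give a direct injection $B_t(\bfc_1)\cap B_t(\bfc_2)\hookrightarrow B_t(\bfc_1)\cap B_t(\bfc_2')$ whenever $d(\bfc_1,\bfc_2')=s-1$, by updating the two coordinates in which $\bfc_2$ and $\bfc_2'$ differ from $\bfc_1$.

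\textbf{Step 2: coordinate decomposition.} Fix a pair with $\bfc_1=\zero$ and $\bfc_2$ of weight $d$, with support $S=\operatorname{supp}(\bfc_2)$ of size $d$. For every $\bfx\in B_t(\zero)\cap B_t(\bfc_2)$ introduce the parameters
\[
i=|\{j\notin S : x_j\neq 0\}|,\quad a=|\{j\in S : x_j=(\bfc_2)_j\}|,\quad b=|\{j\in S : x_j=0\}|,
\]
so that the remaining $d-a-b$ positions of $S$ carry one of $q-2$ ``third'' symbols. A direct check gives $\wt{\bfx}=(d-b)+i$ and $d(\bfx,\bfc_2)=(d-a)+i$, so the two ball conditions become the clean inequalities $a\geq d-t+i$ and $b\geq d-t+i$.

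\textbf{Step 3: counting and summation ranges.} For a fixed admissible triple $(i,a,b)$, the number of $\bfx$ with this profile equals
\[
\binom{n-d}{i}(q-1)^i\,\binom{d}{a}\binom{d-a}{b}(q-2)^{d-a-b},
\]
which already matches the summand in the claim. The constraint $b\leq d-a$ together with $b\geq d-t+i$ forces $a\leq t-i$ and, by symmetry, $b\leq t-i$, yielding the inner bounds $a,b\in[d-t+i,\,t-i]$. Requiring this interval to be non-empty, i.e.\ $d-t+i\leq t-i$, produces exactly the outer bound $i\leq t-\lceil d/2\rceil$.

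The main obstacle is the monotonicity assertion in Step~1: although intuitive, it requires either a careful swap-based injection or algebraic manipulation of the closed form for $f_t(s)$, since the parameter ranges of $(i,a,b)$ shift with $s$. Once that reduction is justified, Steps~2 and~3 are essentially bookkeeping over the three coordinate types determined by $S$ and $\bfc_2$.
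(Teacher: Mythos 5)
The paper does not actually prove this lemma---it is imported verbatim from Levenshtein~\cite{L01A}---so there is no in-paper argument to compare against; your proposal is essentially a reconstruction of Levenshtein's original counting derivation. The core of it (Steps~2 and~3) is correct and complete: the identities $\wt{\bfx}=(d-b)+i$ and $d(\bfx,\bfc_2)=(d-a)+i$ are right, the summand matches, the rectangular range $a,b\in[d-t+i,\,t-i]$ is legitimate because $\binom{d-a}{b}=0$ whenever $a+b>d$, and the outer bound $i\le t-\lceil d/2\rceil$ falls out of non-emptiness exactly as you say. The only genuine gap is the one you flag yourself in Step~1, and your injection sketch is not quite right as stated: if $\bfc_2'$ is obtained from $\bfc_2$ by resetting a single support coordinate $j$ to $(\bfc_1)_j$, the natural map that sends each $\bfx$ with $\bfx_j=(\bfc_2)_j$ to the word with coordinate $j$ reset to $(\bfc_1)_j$ (and fixes all other $\bfx$) is \emph{not} injective, since the image of such an $\bfx$ may itself already lie in $B_t(\bfc_1)\cap B_t(\bfc_2)$ and be a fixed point of the map. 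The repair is a small case split: perform the reset only when $d(\bfx,\bfc_2)=t$ exactly (otherwise $\bfx$ already satisfies $d(\bfx,\bfc_2')=d(\bfx,\bfc_2)+1\le t$ and can be mapped to itself); the potential collision partner then has distance $t+1$ from $\bfc_2$ and so lies outside the domain, and injectivity follows. With that fix the monotonicity of $|B_t(\bfc_1)\cap B_t(\bfc_2)|$ in $d(\bfc_1,\bfc_2)$ is established, the maximum is attained at a minimum-distance pair, and your Steps~2--3 finish the proof. (The alternative of computing the closed form for general separation $s$ and comparing consecutive values term by term also works, but is noticeably messier because the ranges of $i$, $a$, $b$ all shift with $s$.)
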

%\HM{Need to state asymptotic estimate.} \ronicomment{I am not entirely sure that we need. We could maybe say that it is bigger than the hamming ball of radius $\ell - 1$ as was proved below...}

We note that in order to read the entire input, one has to read $\cO(n\cdot N_{n,q}(t,d))$ elements in $\mathbb{F}_q$. Hence, a decoder is said to be \emph{optimal} if it takes $\cO(nN_{n,q}(t,d))$ $\mathbb{F}_q$ field operations to output the correct codeword. Note that we measure time complexity in terms of field operations over $\FF_q$. Thus, we omit $\text{poly}(\log q)$ factors in our complexity notations.

%Note that it takes at least $\cO(nN_{n,q}(t,d))$ to go over all the reads. Hence, a decoder is said to be \emph{optimal} if it takes $\cO(\log(q)nN_{n,q}(t,d))$ time to output the correct codeword. 
%\sh{maybe we should suppress $log(q)$ by explicitly mentioning symbols}

%In the next section, we briefly describe the existing reconstruction decoders and analyze their run-time complexity. 

\subsection{Existing Reconstruction Decoders}
\label{sec:existing}
Let $\text{Vol}_q(\ell, n)\subset \Fq ^n$ denote the volume of the $q$-ary Hamming ball of radius $\ell$. For brevity, we let $N\triangleq N_{n,q}(t,d)+1$. Let the set of $N$ channel outputs (reads) be denoted by $Y \triangleq \left\{\bfy_{1}, \ldots, \bfy_{N}\right\} \subseteq B_{t}^{S}(\bfc)$ for some $\bfc \in \mathcal{C}$. We briefly describe possible solutions to the reconstruction problem, with their complexity.

\subsubsection{Decoder based on majority-logic-with-threshold}
 Levenshtein in \cite{L01A}, showed that the majority-logic decoder is optimal when the code $\cC$ is the entire space. However, when $\cC$ is not the entire space, a unique reconstruction decoder for $q=2$ was designed in \cite{AY19} and was recently extended to arbitrary $q$ in \cite{JLL23}. Furthermore, it was shown that the decoder has a run-time complexity of $\cO(q^{\min\{n,t(e+2)\}}nN)$, which is optimal when $q^{t(e+2)}$ is a constant. However, for $q=\cO(n)$, which is the case for RS codes, the decoder is far from optimality.

\subsubsection{Brute-force decoder}
This decoder iterates through all the codewords in $\cC$ to find the word $\bfc$ such that $Y \subseteq B_{t}(\bfc)$. Note that it takes $\cO(|\cC|Nn)$ time to find the correct word.

\subsubsection{List-decoding using a single read}
The decoder selects any one out of the $N$ reads and generates a list of codewords, $\cL$, in radius $t$ using a list decoder for the code. Next, the decoder iterates through all the codewords in the list to find the word $\bfc$ such that $Y \subseteq B_{t}(\bfc)$. Note that the run-time complexity of this decoder is the complexity that it takes to produce the list plus the time that it iterates over all the codewords in the list, which is again far from optimality. % and check if all the reads are near the codeword. 
% $\cO(\mathcal{COM}({\cL\cD_{\cC}}) + L_{\cC}(n,t)\cdot Nn)$, where $L_{\cC}(n,t) = \underset{\bfy \in \Fq^n}{\max}\vert\{\bfc:\bfc \in B_t(\bfy), \bfc \in \cC\}\vert$.
\vspace{1mm}

In summary, in the regime where $t=\cO(n)$ and $|C|=q^{Rn}$ for some $R>0$, both the majority-logic-with-threshold and brute-force decoders have complexities significantly higher than $\cO(n\cdot N)$.
On the other hand, the complexity of the list-decoder (using a single read) relies on the availability of an efficient list-decoding algorithm for the code $\cC$. Hence, our focus is on codes equipped with efficient list-decoding algorithms, aiming to augment the error-correcting radius in the context of the sequence reconstruction problem. 
A natural candidate is the marvelous {\em Reed-Solomon codes}. 

\subsection{Our Decoder}
 Reed-Solomon codes (RS codes)\cite{RS60} are the most widely used family of codes in theory and practice and have found many applications (some applications include QR codes, secret sharing schemes, space transmission, data storage and more. The ubiquity of these codes can be attributed to their  simplicity as well as to their efficient encoding and decoding algorithms. We next give a definition of RS codes 
 \begin{definition}\label{RS}
		Let $\alpha_1, \alpha_2, \ldots, \alpha_n \in\F_q$ be distinct points in a finite field $\mathbb{F}_q$ of order  $q\geq n$. For $k\leq n$, the $[n,k]_q$ RS code,  %\emph{RS} code of degree $k$ and block length $n$ 
		defined by the evaluation vector $\bm{\alpha} = ( \alpha_1, \ldots, \alpha_n )$, is the set of codewords 
		\[\left \lbrace c_f = \left( f(\alpha_1), \ldots, f(\alpha_n) \right) \mid f\in \mathbb{F}_q[x],\deg f < k \right \rbrace \;.\]
\end{definition}
RS codes are efficiently unique decodable up to half their minimum distance (using, e.g., the Berlekamp–Welch algorithm). 
They are also efficiently list decodable up to the Johnson radius \cite{GS99}. Namely, let $\cC$ be an $[n,k]$ RS code of rate $R:=k/n$. Then, for $\rho \leq 1 - \sqrt{R}$ (the Johnson radius), there is a polynomial time algorithm that given $y\in \F_q^n$ outputs a list of codewords $\cL$, where $\cL$ is such that (i) $d(c,y)\leq \rho n$ for all $c\in \cL$ and (ii) $|\cL| = \text{poly} (n)$.

In this paper, we provide a reconstruction decoder for RS codes that can decode from any $N$ corrupted reads at distance at most $t$ from the transmitted codeword. The running time of our decoder is $\cO(n\cdot N)$, which is the order of the input size and is thus optimal. We also show that in some settings, the value of $t$ for which our decoder works, exceeds the Johnson bound significantly. Formally, we prove the following theorem.
\begin{theorem} \label{thm:result}
    Let $\varepsilon > 0$.
    Let $\cC$ be an $[n,k]_q$ RS code and let $t$ be an integer such that
    \begin{equation} \label{eq:dec-radius-linear}
        \frac{t}{n} \leq 1 - \sqrt{\frac{k}{n} \cdot \left( 1 - \frac{\ell}{2n} + \varepsilon \right) }\;,
    \end{equation}
    where $\ell = t - \floor{\frac{d-1}{2}}$. Let $\bfc \in \cC$ be a codeword and $Y \subseteq B_t(\bfc)$ with $|Y|= N \ge N_{n,q}(t,n-k+1) + 1$.
    %let $Y$ be a set containing $N:=N_{n,q}(t,n-k+1) + 1$ vectors such that for every $\bfy\in Y$, we have $d(\bfc, \bfy) \leq t$ \ey{can we just say  subset of the the radius-$t$ ball?}\sh{such that $Y \subseteq B_t(\bfc)$}. 
    Then there exists an algorithm that takes the received set $Y$ as its input and outputs $\bfc$ in time $\cO(n\cdot N + n^3\varepsilon^{-6})$.
\end{theorem}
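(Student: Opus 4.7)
I would aggregate the $N$ reads into a single multiplicity matrix and feed it into the Koetter-Vardy (KV) soft-decoder; Levenshtein's uniqueness guarantee then selects $\bfc$ from the returned list. Concretely, I would start by computing the $n\times q$ matrix $M$ with entries $m_{i,\alpha} = |\{ j : (\bfy_j)_i = \alpha \}|$ in $\cO(nN)$ field operations, and then invoke KV on $M$ with multiplicity parameter $s=\Theta(\varepsilon^{-2})$, producing in time $\cO(n^3/\varepsilon^6)$ a polynomial-sized candidate list $\cL$.

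The crux is showing $\bfc\in\cL$. The KV guarantee (modulo lower-order corrections in $1/s$) places $\bfc$ in $\cL$ once $\sum_i m_{i,c_i} > \sqrt{(k-1)\sum_{i,\alpha} m_{i,\alpha}^2}$. For the left-hand side, every read lying inside $B_t(\bfc)$ gives
\[
\sum_i m_{i,c_i} \;=\; nN - \sum_j d_H(\bfy_j,\bfc) \;\ge\; N(n-t).
\]
For the right-hand side, the double-counting identity
\[
\sum_{i,\alpha} m_{i,\alpha}^2 \;=\; nN^2 \;-\; 2\sum_{j<j'} d_H(\bfy_j,\bfy_{j'})
\]
reduces the task to a lower bound on the total pairwise Hamming distance of the $N$ distinct reads. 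I would use the hypothesis $N \ge N_{n,q}(t,n-k+1)+1$ together with the distinctness of the reads to argue $\sum_{j<j'} d_H(\bfy_j,\bfy_{j'}) \ge \tfrac{\ell}{4} N^2 - O(\varepsilon n N^2)$, which then plugs into the KV condition to yield exactly the inequality \eqref{eq:dec-radius-linear}.

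To pick $\bfc$ out of $\cL$, \Lref{seq_recon_bound} combined with the assumption $|Y|\ge N_{n,q}(t,n-k+1)+1$ guarantees that $\bfc$ is the \emph{unique} codeword whose radius-$t$ ball contains $Y$; hence iterating over $\cL$ and checking $\max_j d_H(\bfy_j,\bfc') \le t$ for each candidate singles out $\bfc$. With the list size $|\cL|$ polynomial in $n$ and $1/\varepsilon$, and with scores $\sum_i m_{i,c'_i}$ cheaply recomputable from $M$ in $\cO(n)$ per candidate for initial filtering, this verification step fits inside the $\cO(nN + n^3/\varepsilon^6)$ budget.

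I expect the main obstacle to be the lower bound on the pairwise distance sum in the previous paragraph. Distinctness alone yields only $\binom{N}{2}$, which does not scale with $\ell$ and so only recovers the Johnson radius. The extra strength must come from the Levenshtein inequality: were the reads to cluster too tightly, they could be covered by the radius-$t$ ball of a second codeword at minimum distance $d=n-k+1$, contradicting $|Y| > N_{n,q}(t,n-k+1)$. Turning this heuristic into a quantitative Plotkin-style second-moment estimate of order $\Theta(\ell N^2)$, where the parameter $\ell = t-\floor{(d-1)/2}$ enters precisely because it measures how far one is decoding beyond the unique-decoding radius, is where most of the combinatorial work should go.
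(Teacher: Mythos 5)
Your overall architecture (build a multiplicity matrix, run Koetter--Vardy, use Levenshtein's uniqueness to pick $\bfc$ out of the list) matches the paper, but you aggregate \emph{all} $N$ reads into $M$, whereas the paper deliberately uses only \emph{two} reads. This difference is exactly where your argument has a genuine gap. Your entire derivation hinges on the lower bound $\sum_{j<j'} d_H(\bfy_j,\bfy_{j'}) \ge \tfrac{\ell}{4}N^2 - O(\varepsilon n N^2)$, i.e.\ that the \emph{average} pairwise distance of the $N$ reads is $\Omega(\ell)$. You correctly observe that distinctness alone gives nothing of the sort, and you leave the bound as ``where most of the combinatorial work should go'' --- but that is the theorem. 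What one actually needs here is a minimum-average-distance (isoperimetric/second-moment) statement for sets of more than $\mathrm{Vol}_q(\ell-1,n)$ distinct words, which is a nontrivial extremal problem that neither you nor the paper proves. The paper sidesteps it entirely: Lemma~\ref{lem:two-reads-2l-dist} shows only that \emph{some pair} of reads is at distance $\ge 2\ell-1$ (because $N > \mathrm{Vol}_q(\ell-1,n)$ forbids all reads from lying in a ball of radius $\ell-1$), Claim~\ref{clm:two-reads-linear-time} shows a single linear scan from an arbitrary first read finds a pair at distance $\ge \ell$ in $\cO(nN)$ time, and then Algorithm~\ref{alg:rec-dec-alg} is run with $Y'=\{\bfu,\bfv\}$ and $\mu=\lceil 1/\varepsilon\rceil$. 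Proposition~\ref{prop:dec-two-reads} with $\delta\rho = \ell/n$ then yields exactly \eqref{eq:dec-radius-linear}. A worst-case pair bound is all that is needed; no averaged version over all $\binom{N}{2}$ pairs is required.

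There is a second, quantitative problem with your route even granting the distance bound: if the entries of $M$ are the raw counts $m_{i,\alpha}=|\{j:(\bfy_j)_i=\alpha\}|$, then $C(M)=\Theta(nN^2)$ and Theorem~\ref{thm:kv-alg} gives a running time of $\cO(C(M)^3)=\cO(n^3N^6)$, which is astronomically larger than the claimed $\cO(n^3\varepsilon^{-6})$ since $N$ is exponential in $\ell$. To keep the KV step within budget you would have to quantize $M$ down to entries of size $\cO(\varepsilon^{-2})$ and control the rounding loss in both the score and the cost; you do not address this, and your ``multiplicity parameter $s$'' conflates the KV scaling parameter with the actual matrix entries. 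Relatedly, with $|\cL|$ only ``polynomial in $n$,'' the final verification over all reads costs $\cO(|\cL|\cdot nN)$, which exceeds $\cO(nN)$; the paper's two-read construction gives $C(M)=\cO(n\varepsilon^{-2})$ and hence a list of size $\cO(1/(\varepsilon\sqrt{R}))$, making the verification genuinely linear in the input. In short: your reduction to the KV score/cost inequality and your uniqueness argument are fine, but the combinatorial heart of the proof is missing, and the paper's two-read device is precisely the idea that replaces it.
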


We note that by slightly changing the algorithm constructed in \Cref{thm:result}, we can improve the number of errors that we can recover from, $t$, at the expense of higher complexity. Formally,
\begin{theorem}
    %\HM{Change to Theorem?}
    Let $\varepsilon > 0$.
    Let $\cC$ be an $[n,k]_q$ RS code and let $t$ be an integer such that 
    \begin{equation} \label{eq:dec-radius-quadratic}
        \frac{t}{n} \leq 1 - \sqrt{\frac{k}{n} \cdot \left( 1 - \frac{\ell}{n} + \varepsilon \right) } \;,
    \end{equation}
    where $\ell = t - \floor{\frac{d-1}{2}}$. Let $\bfc \in \cC$ be a codeword and let $Y \subseteq B_t(\bfc)$ with $|Y|=N \ge N_{n,q}(t,n-k+1) + 1$.
    %$Y$ be a set containing $N:=N_{n,q}(t,n-k+1) + 1$ vectors such that for every $\bfy\in Y$, we have $d(\bfc, \bfy) \leq t$ \ey{can we just say  subset of the the radius-$t$ ball?}. 
    Then there exists an algorithm that takes the received set $Y$ as its input and outputs $\bfc$ in time $\cO(n\cdot N^2 + n^3 \varepsilon^{-6})$. %\HM{$O(nN^2 + ...)$?}
\end{theorem}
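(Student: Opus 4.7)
The plan is to refine the algorithm from \Cref{thm:result} by extracting a sharper bound on the Koetter--Vardy cost via pairwise statistics of the reads. Both proofs share the same three-stage pipeline: (i) form a multiplicity matrix $M$ from the $N$ reads; (ii) run the KV soft-decoder on $M$ at precision $\varepsilon$, producing a polynomial-size list $\cL$ of candidate codewords; (iii) return the unique $\bfc' \in \cL$ satisfying $Y \subseteq B_t(\bfc')$, which exists and is unique because $N > N_{n,q}(t, n-k+1)$. The KV step costs $\cO(n^3 \varepsilon^{-6})$ and the verification step is dominated by it, so the whole running time is pinned down by how expensive it is to form and analyze $M$.

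I would take the natural count matrix $M_{a,i} = n_{i,a} := |\{j : \bfy_j[i] = a\}|$. For the transmitted $\bfc$, the KV score satisfies
\[
\text{Sc}(\bfc, M) \;=\; \sum_{j=1}^N \left(n - d_H(\bfy_j, \bfc)\right) \;\geq\; N(n - t)\;.
\]
For the cost, the elementary identity
\[
\|M\|^2 \;=\; \sum_{i,a} n_{i,a}^2 \;=\; n N^2 \;-\; \sum_{j_1 \neq j_2} d_H(\bfy_{j_1}, \bfy_{j_2})
\]
turns an upper bound on $\|M\|^2$ into a lower bound on the total pairwise Hamming distance between the $N$ distinct reads. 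The improvement from \Cref{thm:result}'s decoding radius \eqref{eq:dec-radius-linear} to \eqref{eq:dec-radius-quadratic} is precisely the improvement from a single-read style bound of the form $\sum_{j_1 \neq j_2} d_H(\bfy_{j_1}, \bfy_{j_2}) \gtrsim \tfrac{1}{2} N^2 \ell$ to a pair-based bound of the form $\sum_{j_1 \neq j_2} d_H(\bfy_{j_1}, \bfy_{j_2}) \gtrsim N^2 \ell$. The latter requires actually examining all $\binom{N}{2}$ pairs of reads, which is precisely what costs $\cO(n N^2)$ time. Substituting the sharper bound into the KV condition $\text{Sc}(\bfc,M)^2 > (k-1)\|M\|^2$ gives $(n-t)^2 > (k-1)(n - \ell + n \varepsilon)$, equivalent to \eqref{eq:dec-radius-quadratic} after rearranging and absorbing the $\varepsilon$-terms.

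The hard part is establishing this sharper pairwise distance bound for an arbitrary set of $N \geq N_{n,q}(t, n-k+1) + 1$ distinct words in $B_t(\bfc)$. The intuition is that if the reads clustered too tightly then they would all fit inside $B_t(\bfc) \cap B_t(\bfc')$ for some neighbouring codeword $\bfc' \in \cC$, contradicting the hypothesis $N > N_{n,q}(t, n-k+1)$; turning this intuition into a quantitative inequality with the correct $\ell$-dependence is the combinatorial heart of the proof, and is what forces the modified algorithm to touch pairs of reads rather than processing them one at a time.
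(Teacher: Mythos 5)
Your proposal diverges from the paper's route in a way that introduces two genuine gaps. The paper's proof of this theorem is a one-line modification of the proof of \Cref{thm:result}: replace the linear scan (which, via \Cref{clm:two-reads-linear-time}, only guarantees a pair at distance $\ell$) by an exhaustive $\cO(nN^2)$ search over all $\binom{N}{2}$ pairs, which by \Cref{lem:two-reads-2l-dist} is guaranteed to find \emph{one} pair $\{\bfu,\bfv\}$ with $d(\bfu,\bfv)\ge 2\ell-1$; then run Algorithm~\ref{alg:rec-dec-alg} with $Y'=\{\bfu,\bfv\}$ and $\mu=\ceil{1/\varepsilon}$ exactly as before. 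Plugging $\delta\rho n = 2\ell-1$ into \Cref{prop:dec-two-reads} turns the $\ell/(2n)$ in \eqref{eq:dec-radius-linear} into $\ell/n$, and the multiplicity matrix still has cost $C(M)=\cO(n\varepsilon^{-2})$, so the KV step still costs $\cO(n^3\varepsilon^{-6})$. You instead build the count matrix from \emph{all} $N$ reads and try to control its cost via the \emph{total} pairwise distance.

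The first gap is the one you flag yourself: the bound $\sum_{j_1\ne j_2} d_H(\bfy_{j_1},\bfy_{j_2}) \gtrsim N^2\ell$ is the entire content of your argument and you do not prove it. It is a much stronger statement than what the counting argument underlying \Cref{lem:two-reads-2l-dist} delivers: that lemma only shows the reads cannot all fit in a ball of radius $\ell-1$, which yields a single far pair, not a lower bound on the \emph{average} pairwise distance. Indeed, $N-1$ of the reads could in principle nearly fill a ball of radius $\ell-1$, and whether the average pairwise distance inside such a configuration exceeds $\ell$ depends on the regime (for two typical points of a radius-$r$ Hamming ball the distance is close to $2r$ only when the error supports are typically disjoint); no argument with the correct $\ell$-dependence is supplied. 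The second gap is the complexity of the KV step: with $M_{a,i}=n_{i,a}$ the column sums are $N$, so by \Cref{lem:mult-mat-cost} the cost is $C(M)=\Theta(nN^2)$, and by \Cref{thm:kv-alg}(iii) the interpolation then runs in time $\cO(C(M)^3)=\cO(n^3N^6)$, not $\cO(n^3\varepsilon^{-6})$. Since $N=N_{n,q}(t,d)+1$ is typically exponential in $t$, this destroys the claimed running time; the paper avoids it precisely by restricting the multiplicity matrix to two reads with bounded multiplicity $\mu=\ceil{1/\varepsilon}$. (Your algebraic identity $\|M\|^2=nN^2-\sum_{j_1\ne j_2}d_H(\bfy_{j_1},\bfy_{j_2})$ and the score bound are correct, but they do not rescue either gap.)
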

%We note that the time complexity counts how many field operation are done. 
\begin{remark}
    We note that our decoding radius is beyond the Johnson bound. Indeed, denote $\rho = t/n$, $R = k/n$ and observe that \eqref{eq:dec-radius-linear} becomes
    \begin{equation} \label{eq:dec-radius-linear-asym}
    \rho \leq 1 - \sqrt{R \left(1 - \left( \frac{\rho}{2} - \frac{1 - R}{4}\right) + \varepsilon \right)} \;,    
    \end{equation}
    and \eqref{eq:dec-radius-quadratic} becomes
    \begin{equation} \label{eq:dec-radius-quadratic-asym}
    \rho \leq 1 - \sqrt{R \left(1 - \left( \rho - \frac{1 - R}{2}\right) + \varepsilon \right)} \;.        
    \end{equation}
    A graphical comparison with the Johnson radius is given in \Cref{fig:john-comp}.
\end{remark}
\begin{remark}
    We compare the performance of our decoder with the one suggested above that performs list decoding on a single read.
    List decoding up to \emph{almost} the Johnson radius, i.e., up to $1 - \sqrt{R (1 + \varepsilon)}$ takes $\text{poly}(n, 1/\varepsilon)$ time and produces a list of constant size \cite{GS99}. Thus, the total complexity of this algorithm is $\cO(n\cdot N + \text{poly}(n, 1/\varepsilon))$, as in \Cref{thm:result}. However, this solution works only for values of $t$ such that $t/n \leq 1 - \sqrt{ (k/n) \cdot  (1 + \varepsilon)}$ whereas our decoder works for larger values of $t$, as implied by \eqref{eq:dec-radius-linear}.  
\end{remark}
\begin{figure}
    \centering
    \includegraphics[scale=0.45]{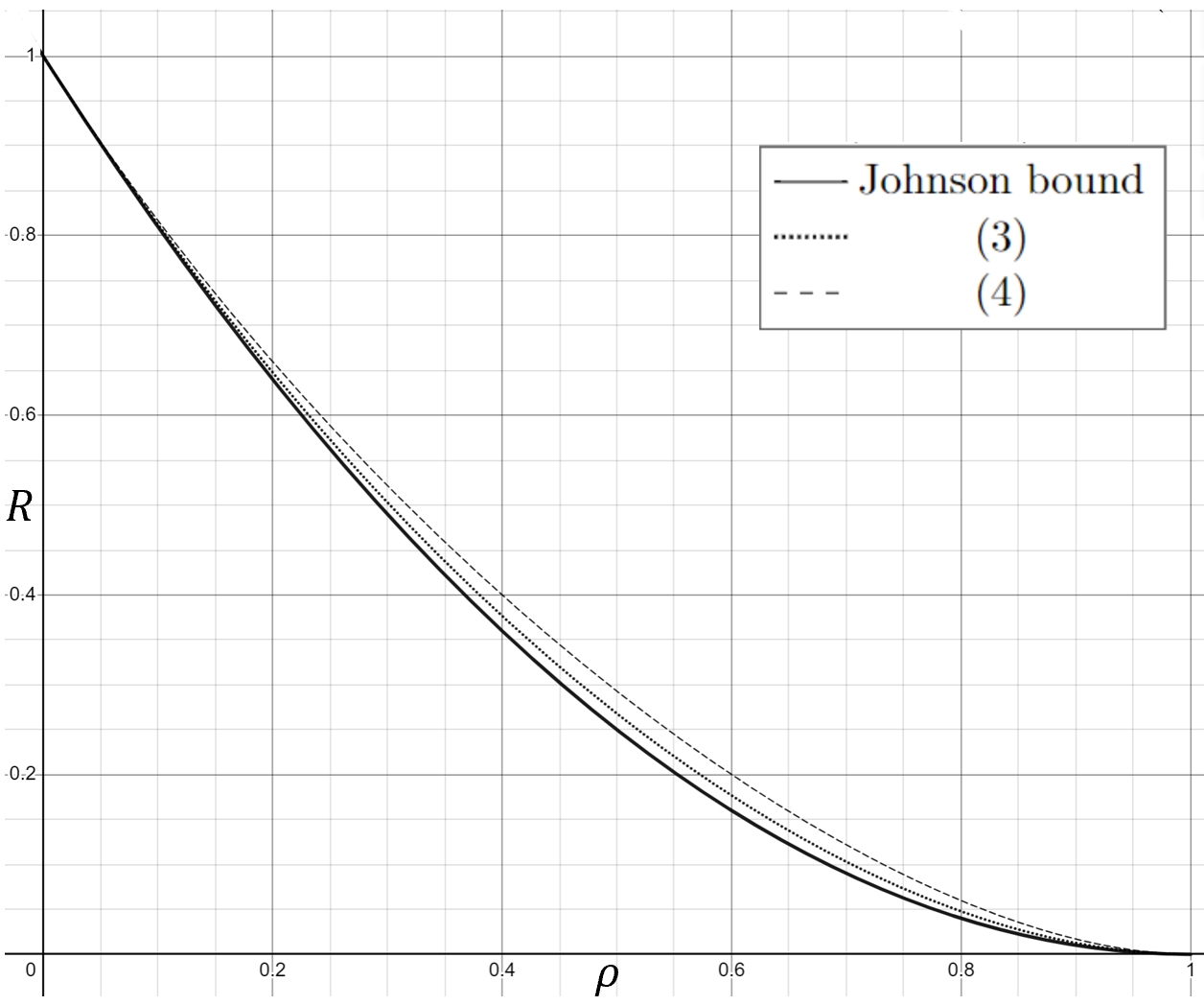}
    \caption{Tradeoff between rate $R$ and the fraction of errors that can be corrected. The algorithm that achieves the tradeoff corresponding to~\eqref{eq:dec-radius-linear-asym} has complexity $\cO(nN)$, while the algorithm achieving~\eqref{eq:dec-radius-quadratic-asym} has complexity $\cO(nN^2)$. }
    \label{fig:john-comp}
\end{figure}

% \section{Preliminaries}
% Let $[n]$ denote the set of integers $\{1,2,\ldots,n\}$. 
% Let  $\FF_q$ be the finite field with $q$ elements. We use $\FF_q[x]$ to denote a ring of polynomials over the finite field $\FF_q$. An $\FF_q$-linear $[n, k]$ code $\cC$ is a $k$-dimensional subspace of $\FF_q^n$. 
% In this work, we focus on Reed-Solomon codes that are defined next.

% Next, we describe the essential elements of Gurusawmi-Sudan \cite{GS99} list decoder. 

\subsection{Soft Decoding \'a la Koetter and Vardy}
%\HM{I think the title for the subsection is alright. If not, how about: Soft Decoding \'a la Koetter and Vardy? }
%\ronicomment{We can comment out a lot in order to save space (if we need to)}
We briefly recall the soft-decision list-decoding algorithm of Koetter and Vardy (KV) \cite{KV03} which is an extension of the Guruswami-Sudan (GS) list decoding for RS codes. The interested reader is refered to \cite{GS99} and \cite{KV03}.

Koetter and Vardy \cite{KV03} extended the GS algorithm to the case where the decoder is supplied with probabilistic reliability information concerning the received symbols. In particular, the Koetter-Vardy algorithm 
performs a soft-decision decoding by assigning unequal multiplicities to points according to this extra information. A convenient way to keep track of the interpolation points and their different multiplicities is by means of a multiplicity matrix. 

\begin{definition}
Let $\delta_0,\delta_1, \ldots, \delta_{q-1}$ be some ordering of $\FF_q$.
A \emph{multiplicity matrix}, denoted by $M$, is a $(q \times n)$-matrix with entries $m_{i,j}$ denoting the multiplicity of $(\delta_i, \alpha_j)$.
\end{definition}

%Note that the multiplicity matrix obtained from the GS algorithm is the matrix in which $m_{i,j} = m$ if  $(\alpha_j, \alpha_i)\in \cP$ and $m_{i,j} = 0$ otherwise.
%From here, the decoding algorithm proceeds as in the Guruswami-Sudan algorithm. In particular, the second step consists of the following interpolation step:
%At a very high-level description, 
We provide a high-level description.
Given a multiplicity matrix $M$, the KV algorithm computes a non-trivial bivariate polynomial $Q_M(X,Y)$ of minimal $(1,k-1)$-weighted degree that has a zero of multiplicity at least $m_{i,j}$ at the point $(\alpha_j,\delta_i)$ for every $(i,j)$ such that $m_{i,j}\neq 0$. Then, the algorithm factorizes this polynomial to get a list of candidate codewords. (we refer again to \cite{KV03}).

The cost of constructing $Q_M(X,Y)$ for a given multiplicity matrix $M$, denoted by $C(M)$, is the number of linear equations that needs to be satisfied for the interpolation. Specifically, 

\begin{definition}
\label{def:cost}
The \emph{cost} for a multiplicity matrix $M$ is defined as follows:
\vspace{-0.3cm}
\begin{align*}
    C(M) = \sum_{i=0}^{q-1}\sum_{j=0}^{n-1}\binom{m_{i,j}+1}{2} \;.
\end{align*}
% Denote by $\Delta_{1,k-1}(C(M)) \triangleq \min \left\{ \delta : N_{1,k-1}(\delta) > C(M)  \right\}$ the minimum $(1,k-1)$-weighted degree such that the number of degrees of freedom is greater than the number of constraints.
% and let the minimum $(1,k-1)$-weighted degree such that the number of degrees of freedom is greater than the number of constraints be denoted by $\Delta_{1,k-1}(C(M)) \triangleq \min \left\{ \delta : N_{1,k-1}(\delta) > C(M)  \right\}$.

\end{definition}

%The third and final step of the algorithm is the factorization step, which is identical to the factorization step of the Guruswami-Sudan algorithm. 
For $\bfv\in\FF_q^{n}$, let $[\bfv]$ denote the $(q \times n)$-matrix representation of $\bfv$, i.e., $[\bfv]_{i,j} = 1$ if $\bfv_j = \delta_i$, and $[\bfv]_{i,j} = 0$ otherwise. 

\begin{definition}
    The score of a vector $\bfv\in\FF_q^{n}$ with respect to a given multiplicity matrix $M$ is defined as the inner product $\cS_M(\bfv) = \left<M,[\bfv]\right>.$
\end{definition}

With these definitions, we are ready to give a black box description of the KV algorithm
\begin{theorem}[\cite{KV03}] \label{thm:kv-alg}
    Let $\cC$ be an $[n,k]_q$ RS code defined with $\bm{\alpha} = ( \alpha_1, \ldots, \alpha_n )$. Let $M$ be an $q\times n$ matrix, which denotes a multiplicity matrix. 
    %Then the KV algorithm that get as input $M$, output a list $\cL$
    %\begin{enumerate}
    %    \item A codeword $c\in \cC$ is in the list if $\cS_M(\bfc) > \sqrt{2(k-1)C(M)}$.
    %    \item it holds that $|\cL| \leq \sqrt{\frac{2C(M)}{k-1}}$.
    %    \item The algorithm runs in time $\cO((C(M))^3)$.
    %\end{enumerate}
    Given the input $M$, the KV algorithm outputs a list $\cL$ such that the following holds.
    \begin{enumerate}[(i)]
    \item A codeword $c\in \cC$ is in the list $\cL$ if we have $\cS_M(\bfc) \ge \sqrt{2(k-1)C(M)}$. 
    \item It holds that $|\cL| \leq \sqrt{\frac{2C(M)}{k-1}}$.
    \item The algorithm runs in time $\cO((C(M))^3)$.
    \end{enumerate}
    
    %\HM{I rephrased this slightly. For (i), I relaxed to $\ge$. See Cor. 5 in KV.}
\end{theorem}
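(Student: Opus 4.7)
The proof follows the standard Guruswami--Sudan template adapted to the weighted multiplicity setting of Koetter--Vardy. The overall strategy is to (a) construct a bivariate polynomial $Q_M(X,Y)$ of small $(1,k-1)$-weighted degree that interpolates through every prescribed point with the prescribed multiplicity, (b) show that any codeword with sufficiently large score is recovered as a $Y$-root of $Q_M$, and (c) account for the list size and the running time.

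For the existence of $Q_M$, I would fix a target $(1,k-1)$-weighted degree $D$ and consider the $\F_q$-linear space of bivariate polynomials of weighted degree at most $D$. A standard lattice-point count shows that this space has dimension at least $\frac{D(D+2)}{2(k-1)}$. On the other hand, imposing multiplicity at least $m_{i,j}$ at each point $(\alpha_j,\delta_i)$ via the vanishing of Hasse derivatives yields exactly $\binom{m_{i,j}+1}{2}$ homogeneous linear constraints per point, and hence $C(M)$ constraints in total. Choosing $D = \lceil\sqrt{2(k-1)C(M)}\rceil$ makes the number of monomials exceed the number of constraints, so a non-zero $Q_M$ of weighted degree at most $D$ exists by elementary linear algebra.

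For the correctness claim (i), I would study $q(X) := Q_M(X,f(X))$ for a polynomial $f\in\F_q[X]$ of degree less than $k$ corresponding to a codeword $\bfc\in\cC$. The key observation is that if $Q_M$ has multiplicity $m$ at $(\alpha_j,\delta_i)$ and $f(\alpha_j)=\delta_i$, then $(X-\alpha_j)^{m}$ divides $q(X)$. Summing these contributions over all columns shows that $q(X)$ has at least $\cS_M(\bfc)$ zeros counted with multiplicity, while its degree is at most $D$. Hence $\cS_M(\bfc) > D$ forces $q\equiv 0$, which is equivalent to $Y-f(X)$ being a factor of $Q_M(X,Y)$; the bivariate factorisation step (e.g.\ Roth--Ruckenstein) therefore recovers $f$ and places $\bfc$ in the output list. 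For claim (ii), since distinct codewords in the list yield distinct linear $Y$-factors of $Q_M$, the list size is at most $\deg_Y Q_M \leq D/(k-1) \leq \sqrt{2C(M)/(k-1)}$. For claim (iii), building $Q_M$ reduces to solving a homogeneous linear system of $C(M)$ equations over $\F_q$ in $\cO(C(M)^3)$ field operations via Gaussian elimination, and the $Y$-root extraction fits within the same budget.

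The main obstacle is the careful alignment of the existence threshold with the correctness threshold: the dimension lower bound is $\tfrac{D(D+2)}{2(k-1)}$ rather than exactly $\tfrac{D^2}{2(k-1)}$, while the decoding criterion is stated as the clean inequality $\cS_M(\bfc)\geq \sqrt{2(k-1)C(M)}$. The standard remedy is to pick $D=\lceil\sqrt{2(k-1)C(M)}\rceil$ and use the $+2$ slack in the monomial count to absorb the floor/ceiling discrepancy, thereby recovering the threshold as stated.
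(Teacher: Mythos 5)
The paper gives no proof of this theorem; it is stated as a black-box import from Koetter and Vardy \cite{KV03}, so there is no internal argument to compare against. Your sketch is precisely the standard argument from that source (and from Guruswami--Sudan): monomial counting for the existence of the interpolation polynomial, the score-versus-weighted-degree comparison forcing $Q_M(X,f(X))\equiv 0$ and hence $Y-f(X)\mid Q_M$, the bound on $\deg_Y Q_M$ for the list size, and Gaussian elimination for the $\cO(C(M)^3)$ runtime; the only detail still to be nailed down is reconciling the non-strict threshold $\cS_M(\bfc)\ge\sqrt{2(k-1)C(M)}$ in the statement with the strict inequality $\cS_M(\bfc)>\deg_{1,k-1}Q_M$ that your correctness step actually requires, which is exactly the floor/ceiling bookkeeping you already flag at the end.
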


% \begin{lemma}\cite[Theorem 3]{KV03}
% \label{lem:list_factors}
%    For a multiplicity matrix $M$, if $\cS_M(\bfc) > \Delta_{1,k-1}(C(M))$ then the polynomial $Q(X,Y)$ has a factor $Y-f(X)$, where $f(X)$ evaluates to a codeword $\bfc\in\rs(\cA,k)$. \ey{$\rs(\cA,k)$ wasn't defined}
% \end{lemma}

% Combining Lemmas \ref{num_monomials} and \ref{lem:list_factors}, we have

% \begin{corollary}\label{cor:list_factors} \cite[Corollary 5]{KV03}
% For a multiplicity matrix $M$, if $\cS_M(\bfc) > \sqrt{2(k-1)C(M)}$ then the polynomial $Q(X,Y)$ has a factor $Y-f(X)$, where $f(X)$ evaluates to a codeword $\bfc\in\rs(\cA,k)$.
% \end{corollary}

% \begin{lemma} \cite[inequality (18)]{KV03} \label{lem:upper-bound-l-size}
%     The size of the list of candidates is upper bounded by $\sqrt{\frac{2C(M)}{k-1}}$. %\ey{what is $M_I$?}
% \end{lemma}

% Throughout, we shall refer to the KV algorithm as the algorithm that gets as input a multiplicity matrix $M$ and output a list of codewords $\cL$. 
% The algorithm succeeds, namely, the codeword that was transmitted is inside the list, if the condition in \Cref{cor:list_factors} holds.
% Furthermore, the algorithm runs in $\cO((C(M))^3)$ time as the dominant step is the interpolation step \cite{KV03} \ronicomment{Should verify that the factorization step indeed takes less time}. 

\section{Constructing the Multiplicity Matrix in Our Settings}

In Algorithm~\ref{alg:mult-mat}, we describe how we construct the multiplicity matrix from a set of reads $Y'\subseteq Y$.  
Then, we shall compute the cost of the constructed multiplicity matrix and the score 
%\HM{This should be score right?} 
of a transmitted codeword with respect to the generated multiplicity matrix.
\vspace{-0.2cm}
\begin{algorithm}
    \caption{Multiplicity matrix constructor}
    \label{alg:mult-mat}
    \SetAlgoLined
    \DontPrintSemicolon
    
    \SetKwInOut{Input}{input}
    \SetKwInOut{Output}{output}

    \Input{A set $Y'$ of reads and an integer $\mu$}
    \Output{A multiplicity matrix $M\in \FF_q^{q\times n}$}
    Set $M = \textbf{0}_{q\times n}$ \;
    \For{$j \in [n]$}{
        \For{$\bfy \in Y'$}{
            Set $i$ such that $\delta_i = y_j$ \;
            $(M)_{i,j} = (M)_{i,j} + \mu$
        }
    }
    Return $M$ 
    
\end{algorithm}

\vspace{-0.5cm}
\begin{lemma}
\label{lem:score_transmitted_codeword}
    Let $M$ be the multiplicity matrix generated by Algorithm~\ref{alg:mult-mat} when given $Y'\subset Y$ and $\mu$ as input.
    Then, the score of the transmitted codeword $\bfc$ with respect to $M$ is
    $$
    \cS_M(\bfc) \geq \mu \cdot |Y'|\cdot (n-t).
    $$
\end{lemma}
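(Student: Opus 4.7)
The plan is to directly expand the inner product $\cS_M(\bfc) = \langle M, [\bfc]\rangle$ and track what Algorithm~\ref{alg:mult-mat} puts into each entry of $M$. Since $[\bfc]_{i,j}=1$ precisely when $\delta_i = c_j$ and is zero otherwise, the inner product collapses to a single summand per column:
\begin{equation*}
\cS_M(\bfc) \;=\; \sum_{i=0}^{q-1}\sum_{j=0}^{n-1} M_{i,j}\,[\bfc]_{i,j} \;=\; \sum_{j=0}^{n-1} M_{i(j),j},
\end{equation*}
where $i(j)$ is the unique index with $\delta_{i(j)} = c_j$.

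Next I would read off $M_{i(j),j}$ from the algorithm. The inner loop increments $M_{i,j}$ by $\mu$ exactly once for every read $\bfy \in Y'$ such that $y_j = \delta_i$. Therefore
\begin{equation*}
M_{i(j),j} \;=\; \mu \cdot \bigl|\{\bfy \in Y' : y_j = c_j\}\bigr|.
\end{equation*}
Substituting this back and exchanging the order of summation yields
\begin{equation*}
\cS_M(\bfc) \;=\; \mu \sum_{\bfy \in Y'} \bigl|\{j : y_j = c_j\}\bigr| \;=\; \mu \sum_{\bfy \in Y'} \bigl(n - d_H(\bfy,\bfc)\bigr).
\end{equation*}

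Finally I would invoke the hypothesis $Y' \subseteq Y \subseteq B_t(\bfc)$, which gives $d_H(\bfy,\bfc) \le t$ for every $\bfy \in Y'$, so each summand is at least $n - t$, producing the claimed bound $\cS_M(\bfc) \ge \mu \cdot |Y'| \cdot (n-t)$. There is no real obstacle here — the statement is a bookkeeping identity, and the only thing to watch is that every read is guaranteed to lie in the ball $B_t(\bfc)$, which is explicit in the lemma's setup.
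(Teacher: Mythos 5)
Your proposal is correct and is essentially the paper's own argument written out in full: the paper's one-line proof ("each word can have at most $t$ erroneous positions") is exactly the bookkeeping you make explicit by expanding $\langle M,[\bfc]\rangle$ column by column and summing $n-d_H(\bfy,\bfc)\ge n-t$ over the reads. No gaps; the two proofs coincide.
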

\begin{proof}
    Since each word can have at most $t$ erroneous positions, the result follows.
\end{proof}

Next, we give a combinatorial lemma, whose proof we defer to the Appendix.

\begin{restatable}{lemma}{combLemma}\label{comb}
    Let $a_0, a_1, \ldots, a_{b-1}$ be positive integers such that $\sum_{i=0}^{b-1}a_i = c$. Then 
    \begin{align*}
        \sum_{i=0}^{b-1}\binom{a_i+1}{2} = \binom{c+1}{2} - \frac{1}{2}\sum_{i=0}^{b-1}a_i(c-a_i). 
    \end{align*}
\end{restatable}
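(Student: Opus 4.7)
My plan is to prove the identity by direct algebraic manipulation, since both sides are quadratic expressions in the $a_i$'s and their sum $c$. I would start by expanding the left-hand side: $\binom{a_i+1}{2} = \tfrac{1}{2}(a_i^2 + a_i)$, and then sum over $i \in \{0,1,\dots,b-1\}$, using $\sum_i a_i = c$ to obtain
\begin{equation*}
\sum_{i=0}^{b-1} \binom{a_i + 1}{2} \;=\; \tfrac{1}{2}\sum_{i=0}^{b-1} a_i^2 \;+\; \tfrac{c}{2}.
\end{equation*}

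Next, I would simplify the right-hand side symmetrically. Writing $\binom{c+1}{2} = \tfrac{1}{2}(c^2+c)$ and expanding
\begin{equation*}
\tfrac{1}{2}\sum_{i=0}^{b-1} a_i(c - a_i) \;=\; \tfrac{c}{2}\sum_{i=0}^{b-1} a_i \;-\; \tfrac{1}{2}\sum_{i=0}^{b-1} a_i^2 \;=\; \tfrac{c^2}{2} \;-\; \tfrac{1}{2}\sum_{i=0}^{b-1} a_i^2,
\end{equation*}
so that
\begin{equation*}
\binom{c+1}{2} - \tfrac{1}{2}\sum_{i=0}^{b-1} a_i(c-a_i) \;=\; \tfrac{c^2+c}{2} - \tfrac{c^2}{2} + \tfrac{1}{2}\sum_{i=0}^{b-1} a_i^2 \;=\; \tfrac{c}{2} + \tfrac{1}{2}\sum_{i=0}^{b-1} a_i^2.
\end{equation*}
Comparing with the expression for the left-hand side completes the proof.

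There is no genuine obstacle here; the lemma is a one-line identity disguised by the binomial coefficients. The only thing to be careful about is tracking the factor of $\tfrac{1}{2}$ correctly and using $\sum_i a_i = c$ at the right moment. If one preferred a more combinatorial flavor, one could alternatively interpret $\binom{a_i+1}{2}$ as counting unordered pairs (with repetition) from a group of size $a_i$, and $\binom{c+1}{2}$ as the same count over the full set of size $c$; the correction term $\tfrac{1}{2}\sum_i a_i(c - a_i)$ then counts (with the appropriate factor of $\tfrac{1}{2}$) the unordered cross-group pairs, which gives a bijective interpretation of the same identity. But for conciseness, the direct algebraic derivation above is the cleanest route.
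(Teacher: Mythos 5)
Your proof is correct: both sides reduce to $\tfrac{1}{2}\sum_{i} a_i^2 + \tfrac{c}{2}$, and the algebra checks out. However, your route differs from the paper's. The paper avoids expanding the binomial coefficients into polynomials; instead it repeatedly applies the identity $\binom{g+h+1}{2} = \binom{g+1}{2} + \binom{h+1}{2} + gh$ to telescope the partial sums $\sum_{j\le i} a_j$ up to $c$, which leaves a cross-term $\sum_{i=1}^{b-1} a_i \sum_{j=0}^{i-1} a_j$ that is then symmetrized into the form $\tfrac{1}{2}\sum_i a_i(c-a_i)$. Your direct expansion is shorter and arguably more transparent, since it makes clear the identity is just a disguised statement about $\sum_i a_i^2$; the paper's telescoping argument has the advantage of staying entirely in the language of binomial coefficients and making the combinatorial content (pairs within groups versus cross-group pairs) more visible --- which is essentially the bijective interpretation you sketch at the end of your proposal. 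Either proof is acceptable; yours is complete as written.
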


Lemma~\ref{comb} then allows us to analyze the cost of the multiplicity matrix.  
%In the next lemma, we analyze the cost of the multiplicity matrix. 

\begin{lemma} \label{lem:mult-mat-cost}
    Let $M$ be the matrix returned by Algorithm~\ref{alg:mult-mat} with input $Y'$ and $\mu$. The cost of $M$ is 
    \begin{equation} \label{eq:mult-mat-cost-general}
    C(M) =  n\binom{\mu|Y'|+1}{2} - \frac{\mu^2}{2}\sum_{\bfv,\bfu \in Y'} d(\bfv,\bfu)\;.
    \end{equation}
\end{lemma}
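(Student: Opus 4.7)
The plan is to apply Lemma~\ref{comb} column by column to $M$, and then reinterpret the resulting quadratic correction as a sum of pairwise Hamming distances among the reads.

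First I would fix a column index $j \in [n]$ and examine the entries $m_{0,j}, m_{1,j}, \ldots, m_{q-1,j}$. By inspection of Algorithm~\ref{alg:mult-mat}, if I let $n_{i,j}$ denote the number of reads $\bfy \in Y'$ with $y_j = \delta_i$, then $m_{i,j} = \mu \cdot n_{i,j}$ and $\sum_i n_{i,j} = |Y'|$, so $\sum_{i=0}^{q-1} m_{i,j} = \mu |Y'|$. Applying \Cref{comb} with $b = q$ and $c = \mu |Y'|$ in column $j$ gives
\[
\sum_{i=0}^{q-1} \binom{m_{i,j}+1}{2} = \binom{\mu|Y'|+1}{2} - \frac{1}{2}\sum_{i=0}^{q-1} m_{i,j}\bigl(\mu |Y'| - m_{i,j}\bigr).
\]
Summing over $j \in [n]$ produces the constant $n \binom{\mu |Y'|+1}{2}$ plus a double sum that I still need to match to the Hamming distance term.

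The main combinatorial identification is the following. Since $m_{i,j} = \mu n_{i,j}$, I have
\[
m_{i,j}\bigl(\mu |Y'| - m_{i,j}\bigr) = \mu^2\, n_{i,j}\bigl(|Y'| - n_{i,j}\bigr),
\]
and $n_{i,j}(|Y'|-n_{i,j})$ counts the ordered pairs $(\bfv,\bfu) \in Y' \times Y'$ with $v_j = \delta_i$ and $u_j \neq \delta_i$. Summing over $i$ removes the dependence on $\delta_i$ and counts all ordered pairs with $v_j \neq u_j$; that is,
\[
\sum_{i=0}^{q-1} n_{i,j}\bigl(|Y'| - n_{i,j}\bigr) = \bigl|\{(\bfv,\bfu) \in Y' \times Y' : v_j \neq u_j\}\bigr|.
\]
Finally, summing over $j$ and swapping the order of summation, each ordered pair $(\bfv,\bfu)$ contributes exactly $d(\bfv,\bfu)$, yielding $\sum_{j} \sum_i m_{i,j}(\mu|Y'|-m_{i,j}) = \mu^2 \sum_{\bfv,\bfu \in Y'} d(\bfv,\bfu)$.

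Combining these steps produces the claimed formula \eqref{eq:mult-mat-cost-general}. I do not expect a genuine obstacle here: the only subtle point is the bookkeeping that identifies $\sum_i n_{i,j}(|Y'|-n_{i,j})$ as counting ordered pairs of reads differing in position $j$, and making sure the convention $\sum_{\bfv,\bfu \in Y'} d(\bfv,\bfu)$ in the statement is indeed over ordered pairs (which is consistent with the factor $\frac{1}{2}$ in front).
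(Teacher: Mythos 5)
Your proposal is correct and follows essentially the same route as the paper: apply Lemma~\ref{comb} column by column with column sum $\mu|Y'|$, then identify $\sum_i m_{i,j}(\mu|Y'|-m_{i,j})$ with $\mu^2$ times the number of ordered pairs of reads differing in position $j$, and sum over $j$. Your pair-counting phrasing is just a restatement of the paper's indicator-function computation, and your closing remark that the distance sum is over ordered pairs (matching the $\frac{1}{2}$ factor) is exactly the right convention.
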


\begin{proof}
From Definition~\ref{def:cost}, it follows that, 
\begin{align*}
    C(M) = \sum_{j=0}^{n-1} \left( \sum_{i=0}^q \binom{m_{i,j}+1}{2} \right).
\end{align*}
Note that the sum of each column of $M$ is exactly $\mu|Y'|$. Therefore, for any $j$, by \Cref{comb}, we have that 
\[    \sum_{i=0}^q \binom{m_{i,j}+1}{2} = \binom{\mu|Y'|+1}{2} - \frac{1}{2}\sum_{i=0}^{q-1}m_{i,j}(\mu|Y'|-m_{i,j}).
\]
Now, by the definition of $m_{i,j}$ and the process by which we construct the matrix, we have (recall that $\{\delta_0, \ldots, \delta_{q-1}\} = \F_q$)
{\small 
\begin{align*}
    \sum_{i=0}^{q-1}m_{i,j}(\mu|Y'|-m_{i,j}) &= \sum_{i=0}^{q-1} \left( \sum_{\bfu \in Y'} \mu \mathbbm{1}_{\bfu_j = \delta_i} \right) \left( \sum_{\bfv \in Y'} \mu \mathbbm{1}_{\bfv_j \neq \delta_i} \right) \\
    &= \mu^2 \sum_{\bfu, \bfv \in Y'} \left( \sum_{i=0}^{q-1} \mathbbm{1}_{\bfu_j = \delta_i}\cdot \mathbbm{1}_{\bfu_j \neq \delta_i} \right) \\
    &= \mu^2 \left( \sum_{\bfu, \bfv \in Y'} \mathbbm{1}_{\bfv_j \neq \bfu_j} \right),
\end{align*}
}
where $\mathbbm{1}$ is the indicator function. 
% Namely $\mathbbm{1}_{a = b} = 1$ if $a=b$ and $0$ otherwise and similarly $\mathbbm{1}_{a \neq b} = 1$ if $a\neq b$ and $0$ otherwise. 
Therefore, putting it together, we have
\begin{align*}
    C(M) &= \sum_{j=0}^{n-1}\left( \binom{\mu|Y'|+1}{2} - \frac{\mu^2}{2} \sum_{\bfu, \bfv \in Y'} \mathbbm{1}_{\bfv_j \neq \bfu_j} \right)  \\
    &= n\cdot \binom{\mu|Y'|+1}{2} - \frac{\mu^2}{2} \sum_{\bfu, \bfv \in Y'} d(\bfu, \bfv) \;,
\end{align*}

\vspace{-0.1cm}
as desired.
\end{proof}

\begin{remark}
    From \eqref{eq:mult-mat-cost-general}, we observe that for a given number of reads and an integer $\mu$, the cost of the multiplicity matrix is dominated by the total distance of the reads in $Y'$. 
    Concretely, the larger the total distance is, the smaller the cost is.
\end{remark}

The reconstruction decoder is given in Algorithm~\ref{alg:rec-dec-alg}. 
\vspace{-0.2cm}
\begin{algorithm}
    \caption{Reconstruction decoder}
    \label{alg:rec-dec-alg}
    \SetAlgoLined
    \DontPrintSemicolon
    \SetKwInOut{Input}{input}
    \SetKwInOut{Output}{output}
    \Input{A set of $N$ distinct reads $Y$, a subset $Y'\subset Y$, and an integer $\mu$}
    \Output{A codeword $c$}
    Generate a multiplicity matrix $M$ using Algorithm~\ref{alg:mult-mat} with input $Y'$ and $\mu$\;
    Run the KV algorithm with the multiplicity matrix $M$\;
    Return $\bfc\in \cL$ such that $Y \subseteq B_{t}(\bfc)$ \;
\end{algorithm}
\vspace{-0.2cm}

We summarize the correctness of this algorithm in the following proposition.
\begin{proposition}
    Let $\cC$ be an $[n,k]_q$ RS code. 
    Let $\bfc \in \cC$ and 
    let $Y \subseteq B_t(\bfc)$ with $|Y|\ge N:=N_{n,q}(t,n-k+1) + 1$.
    %let $Y$  be a set containing $N:=N_{n,q}(t,n-k+1) + 1$ vectors such that $\forall \bfy \in Y$, we have $d(\bfc,\bfy)\leq t$. Let $Y'\subset Y$ and let $\mu$ be an integer. 
    Suppose that $M$ is the matrix generated in the first step of  Algorithm~\ref{alg:rec-dec-alg} when given inputs $Y$, $Y'$, and $\mu$. 
    Then Algorithm~\ref{alg:rec-dec-alg} outputs the codeword $\bfc$ if $\mu \cdot |Y'| \cdot (n-t) >  \sqrt{2 (k-1)C(M)}$. %Then, applying Algorithm~\ref{alg:rec-dec-alg} on $Y$, $Y'$, and $\mu$ outputs the codeword $\bfc$ if $\mu \cdot |Y'| \cdot (n-t) >  \sqrt{2 (k-1)C(M)}$, where $M$ is the matrix generated in the first step of Algorithm~\ref{alg:rec-dec-alg}.
\end{proposition}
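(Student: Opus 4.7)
The proof naturally breaks into two parts: showing that $\bfc$ lands in the KV list $\cL$, and then showing that the final filtering step singles out $\bfc$ uniquely.

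For the first part, I would invoke \Lref{lem:score_transmitted_codeword} to lower bound the score of the transmitted codeword against the constructed multiplicity matrix:
\begin{equation*}
\cS_M(\bfc) \;\geq\; \mu\cdot|Y'|\cdot(n-t).
\end{equation*}
Combining this with the hypothesis $\mu\cdot|Y'|\cdot(n-t) > \sqrt{2(k-1)C(M)}$ gives $\cS_M(\bfc) > \sqrt{2(k-1)C(M)}$. By \Tref{thm:kv-alg}(i), any codeword whose score exceeds this threshold appears in the list $\cL$ returned by the KV algorithm, so $\bfc \in \cL$.

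For the second part, I would show that $\bfc$ is the \emph{unique} codeword in $\cC$ such that $Y \subseteq B_t(\bfc)$, which is exactly the filter applied in the last line of Algorithm~\ref{alg:rec-dec-alg}. Suppose for contradiction there exists $\bfc' \in \cC$ with $\bfc' \neq \bfc$ and $Y \subseteq B_t(\bfc')$. Then $Y \subseteq B_t(\bfc) \cap B_t(\bfc')$, which gives
\begin{equation*}
|Y| \;\leq\; |B_t(\bfc) \cap B_t(\bfc')| \;\leq\; N_{n,q}(t, n-k+1),
\end{equation*}
since the minimum distance of an $[n,k]_q$ RS code is $d = n-k+1$ and $N_{n,q}(t,d)$ is by definition the maximum pairwise ball intersection over codewords. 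This contradicts the assumption $|Y| \geq N_{n,q}(t, n-k+1) + 1$. Hence the final step of the algorithm returns exactly $\bfc$.

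No step here presents a genuine obstacle; the proposition is essentially a bookkeeping statement that glues together the KV guarantee (Theorem~\ref{thm:kv-alg}) and the combinatorial uniqueness coming from Levenshtein's bound $N_{n,q}(t,d)$. The only mild subtlety is ensuring that the uniqueness argument is applied to $Y$ (the full read set of size $\geq N_{n,q}(t,n-k+1)+1$) rather than to the possibly smaller subset $Y'$ used to build $M$; this is why the algorithm takes both $Y$ and $Y'$ as inputs and uses $Y$ only for the final verification step.
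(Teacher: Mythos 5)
Your proof is correct and follows essentially the same route as the paper: lower-bound $\cS_M(\bfc)$ via \Lref{lem:score_transmitted_codeword}, invoke \Tref{thm:kv-alg}(i) to place $\bfc$ in the list, and then use the bound $|B_t(\bfc)\cap B_t(\bfc')|\le N_{n,q}(t,n-k+1)<|Y|$ to see that the final filter isolates $\bfc$. The only difference is that you spell out the uniqueness step (and the $Y$ versus $Y'$ distinction) explicitly, whereas the paper compresses it into a one-line appeal to the reconstruction-problem setting.
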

\begin{proof}
    At the first step, we construct $M$ using the reads in $Y'$ by invoking Algorithm~\ref{alg:mult-mat}. By \Cref{lem:score_transmitted_codeword}, the score of $\bfc$ is at least $\mu \cdot |Y'| \cdot (n-t)$.
    Since by our assumption, {${\mu \cdot |Y'| \cdot (n-t) >  \sqrt{2 (k-1)C(M)}}$}, according to \Cref{thm:kv-alg}, $\bfc$ is inside the list returned by the KV algorithm. Now, 
    by the Levenshtein's reconstruction problem settings, all the reads are at distance $t$ from $\bfc$ and thus, the last step uniquely identifies $\bfc$, the transmitted codeword. 
\end{proof}

\section{Constructing the Multiplicity Matrix Using only Two Reads}

In this section, we examine the performance of a multiplicity matrix constructed from only two reads. We will observe that the distance between these two reads significantly impacts the error correction capability of the KV algorithm. Specifically, a larger distance results in a greater decoding radius. In~\Cref{sec:general-two-reads}, we shall consider a general case first, where the two reads did not necessarily come from the Levenshtein's reconstruction problem. In~\Cref{sec:two-reads-lev}, we present our reconstruction decoder for the Levenshtein's reconstruction problem. 
\vspace{-0.1cm}
\subsection{Connecting the Distance between the Two Reads with the Decoding Radius}
\label{sec:general-two-reads}

In the following proposition we are applying the KV algorithm with a multiplicuty matrix based on two reads. 
We show that if we are guaranteed that the two reads are far from each other, then the correction capability exceeds the Johnson radius and that the error correction capability increases in correlation with the distance between the two reads.
Formally,

\begin{proposition} \label{prop:dec-two-reads}
    Let $\delta \in [0,2]$, $\rho\in (0,1)$, and $\varepsilon > 0$. Let $C$ be an $[n,k]_q$ RS code of rate $R = k/n$ such that 
    \begin{equation} \label{eq:rate-dist}
        1 - \rho \geq \sqrt{R \cdot \left( 1 - \frac{\delta\rho }{2} + \varepsilon \right)} \;.
    \end{equation} 
    Let $\bfv$ and $\bfu$ be two corrupted reads of a codeword $\bfc$ such that 
    \vspace{-0.1cm}
    \begin{itemize}
        \item $d(\bfv, \bfc) \leq \rho n$ and $d(\bfu, \bfc) \leq \rho n$.
        \item $d(\bfv, \bfu) = \delta \cdot \rho n$.
    \end{itemize}
    Let $M$ be the multiplicity matrix obtained by Algorithm~\ref{alg:mult-mat} with 
    %input $Z = \{\bfv, \bfu\}$ and $\mu = 1/ \varepsilon$.
    input $Y = Y' = \{\bfv, \bfu\}$ and $\mu = 1/ \varepsilon$.
    Then, the KV Algorithm with multiplicity matrix $M$ outputs a list of codewords $\cL$ such that $\bfc\in \cL$ and $|\cL| = O(1/\varepsilon\sqrt{R})$.
\end{proposition}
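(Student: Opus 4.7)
The plan is to feed the multiplicity matrix $M$ produced by Algorithm~\ref{alg:mult-mat} on $Y' = \{\bfu, \bfv\}$ with $\mu = 1/\varepsilon$ into the Koetter--Vardy black box of Theorem~\ref{thm:kv-alg}, and verify the two inequalities it requires: the score lower bound that certifies $\bfc \in \cL$, and the list-size bound.

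First, I would lower-bound $\cS_M(\bfc)$. Since each of the two reads lies within distance $t = \rho n$ of $\bfc$ and $|Y'| = 2$, Lemma~\ref{lem:score_transmitted_codeword} gives directly
\[
\cS_M(\bfc) \;\geq\; 2\mu n (1 - \rho).
\]
Next, I would compute $C(M)$ exactly via Lemma~\ref{lem:mult-mat-cost}. With two reads, the double sum $\sum_{\bfu,\bfv \in Y'} d(\bfu,\bfv)$ collapses to $2 \delta \rho n$, and after expanding $\binom{2\mu + 1}{2}$ and substituting $\mu = 1/\varepsilon$ I expect
\[
C(M) \;=\; \mu^2 n (2 - \delta \rho) + \mu n \;=\; \mu^2 n \bigl( 2 - \delta \rho + \varepsilon \bigr).
\]
This already exhibits the phenomenon mentioned in the remark following Lemma~\ref{lem:mult-mat-cost}: a larger $\delta$ (reads farther apart) lowers the cost, which is what permits decoding beyond the Johnson radius.

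With these two quantities in place, the KV containment condition $\cS_M(\bfc) \geq \sqrt{2(k-1) C(M)}$ reduces, after squaring and dividing by $2 n \mu^2$, to
\[
(1 - \rho)^2 \;\geq\; \frac{k - 1}{n} \left( 1 - \frac{\delta \rho}{2} + \frac{\varepsilon}{2} \right),
\]
which I would derive from the hypothesis~\eqref{eq:rate-dist} by a one-line comparison using $R = k/n \geq (k-1)/n$ and $\delta \rho \leq 2$. For the list size, part (ii) of Theorem~\ref{thm:kv-alg} together with the bound $C(M) = O(\mu^2 n)$ yields
\[
|\cL| \;\leq\; \sqrt{\frac{2 C(M)}{k - 1}} \;=\; O\!\left( \frac{\mu}{\sqrt{R}} \right) \;=\; O\!\left( \frac{1}{\varepsilon \sqrt{R}} \right),
\]
matching the claim.

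Every step is a mechanical application of a previously stated lemma, so I do not anticipate a substantive obstacle. The only subtlety worth tracking is the mild gap between the $\varepsilon$ appearing in the hypothesis and the $\varepsilon/2$ that emerges from the KV inequality; this is absorbed by the difference between $R$ and $(k-1)/n$, which is precisely why~\eqref{eq:rate-dist} is stated with $\varepsilon$ rather than $\varepsilon/2$.
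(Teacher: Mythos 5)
Your proposal is correct and follows essentially the same route as the paper's proof: lower-bound the score via Lemma~\ref{lem:score_transmitted_codeword}, compute $C(M) = n\mu^2(2 - \delta\rho + \varepsilon)$ via Lemma~\ref{lem:mult-mat-cost}, and verify the KV containment and list-size conditions of Theorem~\ref{thm:kv-alg}. If anything, your handling of the slack between $\varepsilon$ and $\varepsilon/2$ (absorbed by $R$ versus $(k-1)/n$, using $\delta\rho \le 2$ to keep the relevant factor positive) is more explicit than the paper's, which simply asserts the final inequality ``clearly holds.''
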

\begin{proof}
    The condition for a codeword $\bfc$ to be in the list, according to the KV algorithm analysis is that 
    $S_{M}(\bfc) > \sqrt{2(k-1)C(M)}$. 
    According to \Cref{lem:score_transmitted_codeword}, $S_{M}(\bfc) \geq (n-\rho n) \cdot 2\mu$ and according to \Cref{lem:mult-mat-cost}, 
    \[
        C(M) = n\binom{2\mu+1}{2} - \mu^2\cdot \delta \rho n = n\mu^2 \left( 2 + \frac{1}{\mu} - \delta \rho \right) \;.
    \]
    Thus, we have to ensure that 
    \[
    (n-\rho n) \cdot 2\mu > \sqrt{2(k - 1)\cdot n \mu^2\cdot \left(2 + \frac{1}{\mu} - \delta \rho \right)}\;.
    \]
    Divide by $n\cdot 2\mu$ to get
      \begin{align*}
        1 - \rho &> \sqrt{ \left(R - \frac{1}{n} \right) \cdot \left(1 - \frac{\delta \rho}{2} + \frac{\varepsilon}{2} \right)}\;,
    \end{align*} 
    % \begin{align*}
    %     1 - \rho &> \sqrt{\frac{k-1}{n}\cdot \left( 1 - \frac{ \delta\rho}{2} + \frac{1}{2\mu} \right)} \\
    %     &= \sqrt{ \left(R - \frac{1}{n} \right) \cdot \left(1 - \frac{\delta \rho}{2} + \frac{\varepsilon}{2} \right)}\;,
    % \end{align*} 
    and note that this inequality clearly holds according to \eqref{eq:rate-dist}.
    As for the list size, according to \Cref{thm:kv-alg}, the list size is upper bound by 
    \[
     \sqrt{\frac{n\mu^2 \left( 4 - 2\delta\rho + \frac{2}{\mu}\right)}{k-1}} = O\left( \frac{1}{\varepsilon\sqrt{R}} \right)\;. \qedhere
    \]
\end{proof}

% \vspace{-0.6cm}
\begin{remark}
    We consider the two extreme points of $\delta$.
    % \vspace{-0.1cm}
    \begin{itemize}
        \item When $\delta = 0$, the two reads are equal. In this case, inequality \eqref{eq:rate-dist} yields exactly the Johnson bound.
        \item When $\delta = 2$, inequality \eqref{eq:rate-dist} gives
        \[
        1 - \rho \geq \sqrt{R \cdot \left( 1 - \rho + \varepsilon \right)} \;,
        \]
        which implies 
        \[
        \rho < 1 - R - \varepsilon \;.
        \]
        %Namely, when $\delta = 2$ this list decoding algorithm can go almost up to the Singleton bound.
    \end{itemize}
\end{remark}

%\begin{remark}
%    We note the difference between this setting and the one of list recovery. In list recovery, the input are $n$ lists $S_1, \ldots, S_n$ and the goal is to output all $c\in \cC$ such that $c_i \in S_i$ for all at least $(1 - \rho)n$ indices $i\in [n]$. Formally, it is said that a code is $(\rho, \ell, L)$-list recoverable if for all $S_1, \ldots, S_n$ with $|S_i|\leq \ell$, there are at most $L$ codewords $c\in \cC$ such that $|\{ i\in [n]: c_i\notin S_i \}|\leq \rho n$.
%    It is well known that the Johnson bound for list recovery states that $1 - \rho \leq \sqrt{R \cdot \ell}$.
    
%    This setting is similar to the one described in \Cref{prop:dec-two-reads}. For each coordinate, we either have one or two options for the symbol in that coordinate. Furthermore, \ronicomment{explain} our corruption radius, after observing the two reads is between $0$ when $\delta = 2$ and $\rho$ when $\delta = 0$. 

%    A crucial observation is that in list recovery scenario our objective is 
%\end{remark}
\subsection{Back to Levenshtein's Reconstruction Problem}
\label{sec:two-reads-lev}
In this section, we prove \Cref{thm:result}. 
In Levenstein's reconstruction problem, given a codeword $\bfc\in \cC$, an adversary will output a set $Y$ of $N$ reads such that (i) all the reads are at distance $t$ from $\bfc$ and (ii) the $N$ reads are sufficient to identify $\bfc$ uniquely. 
According to \Cref{prop:dec-two-reads}, our goal is to find two reads of maximal distance inside the set $Y$. We first start with the following lemma whose proof for the binary case ($q=2$) was given in \cite[Lemma 17]{YB18}

\begin{lemma} \label{lem:two-reads-2l-dist}
    Let $d$ be odd, let $e = \frac{d-1}{2}$, and let $t = e + \ell$. Also, assume that $\ell < \frac{d}{2}$.
    Let $Y$ be a set containing $N_{n,q}(t, d) + 1$ reads. Then, there exist two reads in $Y$ with distance at least $2\ell - 1$.
\end{lemma}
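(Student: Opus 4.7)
The plan is to argue by contraposition: assume that every two reads in $Y$ are at Hamming distance at most $2\ell - 2$ and show that this forces $|Y| \le N_{n,q}(t,d)$, contradicting the hypothesis $|Y| = N_{n,q}(t,d) + 1$. The strategy is to produce a vector $\bfc^\star \in \F_q^n$ with $d(\bfc, \bfc^\star) \ge d$ such that $Y \subseteq B_t(\bfc^\star)$. Once such a $\bfc^\star$ is in hand, $Y \subseteq B_t(\bfc) \cap B_t(\bfc^\star)$, and the combinatorial interpretation of Levenshtein's quantity (\Cref{seq_recon_bound}) yields $|Y| \le N_{n,q}(t,d)$.

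The natural construction of $\bfc^\star$ is as follows. Pick a read $\bfy^\star \in Y$ that maximizes $s := d(\bfy^\star, \bfc)$, and let $E^\star = \{j : \bfy^\star_j \ne \bfc_j\}$, so $|E^\star| = s \le t$. Build $\bfc^\star$ by setting $\bfc^\star_j = \bfy^\star_j$ on $j \in E^\star$ and, in addition, modifying $\bfc$ on some further set $F \subseteq [n] \setminus E^\star$ of $d - s$ coordinates to any value distinct from $\bfc_j$; leave all remaining coordinates equal to $\bfc_j$. Then $d(\bfc, \bfc^\star) = |E^\star| + |F| = d$ and $d(\bfy^\star, \bfc^\star) = |F| = d - s$. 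By the triangle inequality, for every $\bfy_i \in Y$,
\[
d(\bfy_i, \bfc^\star) \;\le\; d(\bfy_i, \bfy^\star) + d(\bfy^\star, \bfc^\star) \;\le\; (2\ell - 2) + (d - s),
\]
which is at most $t = e + \ell$ precisely when $s \ge t - 1$.

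The main obstacle is the remaining regime $s \le t - 2$. There, the maximum-weight read $\bfy^\star$ is already close to $\bfc$, which (together with $d(\bfy_i, \bfy^\star) \le 2\ell-2$ and the reverse triangle inequality) forces every $\bfy_i \in Y$ to lie both in $B_{2\ell-2}(\bfy^\star)$ and in the thin annulus $s - 2\ell + 2 \le d(\bfy_i, \bfc) \le s$. I expect to close this case in one of two ways: either by a sharper choice of the extra flipped coordinates $F$, biased toward positions on which many reads disagree with $\bfc$ so that the symmetric-difference count $d(\bfy_i, \bfc^\star)$ drops below $t$ via a nontrivial cancellation, or by a direct volume-type argument showing that a set tightly clustered inside $B_{t-2}(\bfc)$ has size at most $N_{n,q}(t,d)$. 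Extending the binary argument of \cite[Lemma~17]{YB18} to general $q$ in this boundary regime is the technical crux.
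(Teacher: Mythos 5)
Your contrapositive framing agrees with the paper's, but the engine you propose --- exhibiting a second center $\bfc^\star$ with $d(\bfc,\bfc^\star)\ge d$ whose radius-$t$ ball also covers $Y$ --- is different from the paper's argument and cannot be completed. The case you leave open, $s\le t-2$, is exactly where the approach fails, and not merely for technical reasons. Since $\ell<d/2$ gives $\ell\le e$ and hence $d-t=e+1-\ell\ge 1$, the set $Y$ may contain a read $\bfy$ with $d(\bfy,\bfc)<d-t$ (for instance $\bfy=\bfc$ itself, which is a legitimate element of $B_t(\bfc)$). For any word $\bfc^\star$ with $d(\bfc,\bfc^\star)\ge d$, the triangle inequality then forces $d(\bfy,\bfc^\star)\ge d-d(\bfy,\bfc)>t$, so \emph{no} admissible second center exists, no matter how cleverly the flipped set $F$ is chosen. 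Thus your first proposed repair is hopeless in general, and your second proposed repair (``a direct volume-type argument for tightly clustered sets'') is not a boundary detail but the entire substance of the proof, which the proposal does not supply; as written you have a complete argument only for the easy regime $s\ge t-1$. A smaller issue: $N_{n,q}(t,d)$ is defined in the paper as a maximum over pairs of \emph{codewords}, so even in the case you do handle you need Levenshtein's fact that $\left|B_t(\bfu)\cap B_t(\bfv)\right|$ depends only on $d(\bfu,\bfv)$ and is non-increasing in that distance in order to conclude $\left|B_t(\bfc)\cap B_t(\bfc^\star)\right|\le N_{n,q}(t,d)$ for a non-codeword $\bfc^\star$.

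The paper takes the opposite route: instead of covering $Y$ by two distant balls, it lower-bounds $N_{n,q}(t,d)$ and upper-bounds the size of any clustered set. Concretely, it picks words $x,y$ at distance $d$ and a midpoint $u$ with $d(x,u)=d(y,u)=(d+1)/2$, checks by the triangle inequality that $B_{\ell-1}(u)\subseteq B_t(x)\cap B_t(y)$, and concludes $N_{n,q}(t,d)\ge \text{Vol}_q(\ell-1,n)$; it then argues that a set of pairwise distance at most $2\ell-2$ lies in a ball of radius $\ell-1$ and therefore has at most $\text{Vol}_q(\ell-1,n)\le N_{n,q}(t,d)$ elements. If you want to rescue your write-up, replace the $\bfc^\star$ construction entirely with a volume comparison of this kind; the case analysis on $s$ buys nothing once such a bound is in place.
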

\begin{proof}
    We will prove that $N_{n,q}(t, d) \geq \text{Vol}_q(\ell - 1, n)$ where $\text{Vol}_q(\ell - 1, n)$ is the size of the Hamming ball of radius $\ell-1$. Note that this proves the claim. 
    Indeed, assume that the distance of every pair of reads is at most $2\ell - 2$. 
    It means that all the $N_{n,q}(t, d) + 1$ reads fit inside a ball of radius $\ell - 1$ which leads to a contradiction.

    Let $x, y$ be two codewords such that $d(x,y) = d$, the minimum distance of the code. Let $u\in \F_q^n$ be such that $d(x, u) = d(y,u) = (d + 1)/2$. 
    Consider, $B_{\ell-1}(u)$, the Hamming ball around $u$ of radius $\ell - 1$. 
    We will prove that $B_{\ell-1}(u) \subseteq B_{t}(x)$ and $B_{\ell-1}(u) \subseteq B_{t}(y)$.
    This would imply that $\text{Vol}_q(\ell - 1, n) = |B_{\ell-1}(u)| \leq |B_t(x) \cap B_t(y)| = N_{n,q}(t,d)$.

    Note that by the triangle inequality, for any $v\in B_{\ell-1}(u)$, it holds that $d(v, x)\leq d(v, u) + d(u,x) = \ell-1 + \frac{d+1}{2} = t$ and an identical argument also yields that $d(v, x) \leq t$ for any $v\in B_{\ell-1}(u)$. 
    Thus, we have proved that $B_{\ell-1}(u)\subseteq B_t(x) \cap B_t(y)$, as desired. 
\end{proof}
\vspace{-0.2cm}
Clearly, we can perform a search that would take $\cO(n\cdot N^2)$ time to find two reads that are at distance $2\ell - 1$ apart. 
We note again that $N$ is relatively big compared to $n$, so we would like to reduce our dependency on $N$. 
A simple observation shows that to find two reads that are at distance $\ell$, we can perform only $\cO(n\cdot N)$ iterations. Indeed, set $\bfu$ to be the first read. Then, evaluate the distance of all the reads from $\bfu$ and take the read that is the farthest apart. 
The correctness of this claim is given in the following simple claim
\begin{claim} \label{clm:two-reads-linear-time}
    Let $Y$ be a set of reads such that there are two reads with distance at least $2\ell - 1$. 
    Let $\bfu \in Y$. There exists a $\bfv \in Y$ such that $d(\bfu, \bfv) \geq \ell$.
\end{claim}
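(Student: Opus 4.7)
The plan is to prove this by a single application of the triangle inequality, using the hypothesized pair of distant reads as a witness. More precisely, I would let $\bfy_1, \bfy_2 \in Y$ be the two reads guaranteed by the hypothesis, i.e., $d(\bfy_1, \bfy_2) \geq 2\ell - 1$, and then argue that an arbitrary $\bfu \in Y$ cannot be simultaneously close to both of them.

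The key step is the triangle inequality
\begin{equation*}
2\ell - 1 \;\leq\; d(\bfy_1, \bfy_2) \;\leq\; d(\bfy_1, \bfu) + d(\bfu, \bfy_2).
\end{equation*}
From this, at least one of $d(\bfy_1, \bfu)$ and $d(\bfu, \bfy_2)$ must be at least $\ell$, since otherwise both would be at most $\ell - 1$, giving a total of at most $2\ell - 2 < 2\ell - 1$, a contradiction. Hence $\bfv$ can be taken to be whichever of $\bfy_1, \bfy_2$ achieves this larger distance from $\bfu$.

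I do not anticipate any real obstacle here: the only subtlety is noticing that the claim does not require $\bfu$ itself to be one of the far-apart pair --- the triangle inequality handles the case where $\bfu$ is an arbitrary third read automatically. This is exactly why the linear-time search described just before the claim (fix the first read and scan for the farthest one) actually works, since the guarantee of distance $\ell$ (rather than the full $2\ell - 1$) is still enough to drive the argument of \Cref{prop:dec-two-reads} with $\delta \ge \ell / (\rho n)$ and thereby push the decoding radius beyond the Johnson bound as promised in \Cref{thm:result}.
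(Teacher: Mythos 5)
Your proof is correct and is essentially identical to the paper's: both invoke the triangle inequality on the guaranteed far-apart pair to conclude that an arbitrary $\bfu$ must be at distance at least $\ell$ from one of them. No further comment is needed.
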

\begin{proof} 
    Let $\bfw$ and $\bfz$ be two reads with $d(\bfw, \bfz) \geq 2\ell - 1$. 
    Then, by the triangle inequality, $d(\bfu, \bfw) + d(\bfu, \bfz) \geq d(\bfw,\bfz) \geq 2\ell - 1$. Thus, either $d(\bfu, \bfw) \geq \ell$ or $d(\bfu, \bfz) \geq \ell$.
\end{proof}
Therefore, our final algorithm is given in Algorithm~\ref{alg:rec-two-reads}.
\vspace{-0.2cm}
\begin{algorithm}
    \SetAlgoLined
    \DontPrintSemicolon
    \SetKwInOut{Input}{input}
    \SetKwInOut{Output}{output}

    \Input{A set of $N$ distinct reads $Y$, and an integer $r$}
    \Output{A codeword $c$}
    Set $\bfu = \bfy_1$ and set $\bfv = \max_{\bfy\in Y} d(\bfu, \bfy)$\;
    Execute Algorithm~\ref{alg:rec-dec-alg} with $Y$, $\{\bfu, \bfv\}$, and $r$\; 
    \caption{Reconstruction using two reads}
    \label{alg:rec-two-reads}
\end{algorithm}
\vspace{-0.35cm}

\begin{proposition}
    Let $\varepsilon > 0$ and Let $n, k, t$ be integers such that 
    \[
        \frac{t}{n} \leq 1 - \sqrt{\frac{k}{n} \cdot \left( 1 - \frac{\ell}{2n} + \varepsilon \right)} \;.
    \]
    Let $\cC$ be an $[n,k]_q$ RS code.
    Let $\bfc \in \cC$ be a codeword and let $Y$ be a set containing $N:=N_{q,n}(t,n-k+1)+1$ vectors such that $\forall \bfy \in Y$, we have $d(\bfc,\bfy)\leq t$.
    Then, applying Algorithm~\ref{alg:rec-two-reads} on $Y$ and $\ceil{1/\varepsilon}$ we get back $\bfc$. Furthermore, the time complexity of the algorithm is $\cO(n\cdot N + n^3\varepsilon^{-6})$.
\end{proposition}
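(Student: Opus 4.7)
The plan is to reduce to Proposition~\ref{prop:dec-two-reads} by certifying that the two reads $\bfu,\bfv$ selected in Algorithm~\ref{alg:rec-two-reads} are sufficiently far apart, and then to use the Levenshtein guarantee $N \geq N_{n,q}(t,n-k+1)+1$ to single out $\bfc$ from the KV list. The rate/distance hypothesis on $t/n$ is designed precisely so that the two steps fit together when we take the effective parameter $\delta = \ell/t$.

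First I would show that $d(\bfu,\bfv) \geq \ell$. Lemma~\ref{lem:two-reads-2l-dist} guarantees that $Y$ contains some pair at mutual Hamming distance at least $2\ell-1$, and then Claim~\ref{clm:two-reads-linear-time} applied with $\bfu = \bfy_1$ yields some $\bfy \in Y$ with $d(\bfu,\bfy) \geq \ell$; since the algorithm takes $\bfv$ to be the farthest read from $\bfu$, this forces $d(\bfu,\bfv) \geq \ell$, while $d(\bfu,\bfc),\, d(\bfv,\bfc) \leq t$ hold by hypothesis. Next I would invoke Proposition~\ref{prop:dec-two-reads} with $\rho = t/n$, $R = k/n$, $\mu = \lceil 1/\varepsilon \rceil$, and $\delta := \ell/t$, so that $\delta \rho n = \ell$; condition~\eqref{eq:rate-dist} then coincides with the hypothesis of the present proposition. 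One point to handle carefully: the actual $d(\bfu,\bfv)$ may strictly exceed $\ell$, but by Lemma~\ref{lem:mult-mat-cost} the cost $C(M)$ is a monotonically decreasing function of the pairwise distance of the reads used, so substituting the lower bound $\ell$ only makes the KV score-versus-cost inequality harder to satisfy. Hence Proposition~\ref{prop:dec-two-reads} applies and produces a list $\cL$ containing $\bfc$ of size $O(1/(\varepsilon\sqrt{R}))$.

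To finish correctness, the defining property of $N_{n,q}(t,n-k+1)$ says that $\bfc$ is the unique codeword whose radius-$t$ ball contains all of $Y$, so the final filtering step of Algorithm~\ref{alg:rec-dec-alg} (invoked inside Algorithm~\ref{alg:rec-two-reads}) returns exactly $\bfc$. For the runtime: computing $\max_{\bfy \in Y} d(\bfu,\bfy)$ costs $O(nN)$; constructing $M$ from two reads is $O(n\mu) = O(n/\varepsilon)$; the KV routine with $C(M) = O(n\mu^2) = O(n\varepsilon^{-2})$ takes $O(C(M)^3) = O(n^3\varepsilon^{-6})$; and checking the constant-size candidate list against $Y$ adds another $O(nN)$. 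Summing yields the advertised $\cO(nN + n^3\varepsilon^{-6})$.

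The main obstacle, I expect, is the $\delta$-monotonicity bookkeeping in the second step: one must ensure that replacing the actual distance $d(\bfu,\bfv)$ by the combinatorial lower bound $\ell$ furnished by Lemma~\ref{lem:two-reads-2l-dist} still delivers the improved factor $1 - \ell/(2n)$ in the decoding radius, and is not giving away back to a plain Johnson-type bound. Once that monotonicity is justified via Lemma~\ref{lem:mult-mat-cost}, the rest is a mechanical reparameterization of Proposition~\ref{prop:dec-two-reads} together with the uniqueness statement built into $N_{n,q}$.
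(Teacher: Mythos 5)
Your proposal is correct and follows essentially the same route as the paper: find two reads at distance at least $\ell$ via Lemma~\ref{lem:two-reads-2l-dist} and Claim~\ref{clm:two-reads-linear-time}, apply Proposition~\ref{prop:dec-two-reads} with $\rho = t/n$ and $\delta = \ell/t$, and use the uniqueness built into $N_{n,q}(t,n-k+1)+1$ reads for the final filtering, with the identical runtime accounting. Your explicit monotonicity remark --- that a distance exceeding $\ell$ only decreases $C(M)$ by Lemma~\ref{lem:mult-mat-cost} and so only helps the score-versus-cost inequality --- is a small point the paper leaves implicit (Proposition~\ref{prop:dec-two-reads} is stated with an exact distance $\delta\rho n$), and it is a worthwhile clarification rather than a deviation.
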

\begin{proof}
    By \Cref{lem:two-reads-2l-dist}, we know that $Y$ consists of two reads such that their distance is at least $2\ell - 1$ and by \Cref{clm:two-reads-linear-time}, we know that the first step in Algorithm~\ref{alg:rec-two-reads} finds two reads that are $\ell$ distance apart in $\cO(n\cdot N)$ time. 
    To ensure that the KV algorithm indeed works and the transmitted codeword is inside the list, we check that $S_M(\bfc) > \sqrt{2(k-1) C(M)}$,  where $M$ is the multiplicity matrix constructed by two reads that are $\ell$-distance apart.
    Indeed, in this case, by plugging in \Cref{prop:dec-two-reads} $\rho = t/n$, $\delta = \ell/t$, we get that 
    \[
    \frac{t}{n} \leq 1 - \sqrt{\frac{k}{n} \cdot \left( 1 - \frac{\frac{t}{n} \cdot \frac{\ell}{t}}{2} + \varepsilon \right)} =  1 - \sqrt{\frac{k}{n} \cdot \left( 1 - \frac{\ell}{2n} + \varepsilon \right)} \;,
    \]
    and that running the KV algorithm on $M$ produces a list of size $\cO(1/(\sqrt{R}\varepsilon))$ with $\bfc$ in the list. 
    Note that as $C(M) = \cO(n\cdot \varepsilon^{-2})$, then the KV algorithm runs in time $\cO(n^3\cdot \varepsilon^{-6})$.
    The final step of Algorithm~\ref{alg:rec-dec-alg} iterates over all codewords in the list and for each one, checks if it is at distance $\leq t$ from all the reads. Thus, since the list is of constant size, this takes $\cO(n\cdot N)$ time. 
    Overall, the complexity of Algorithm~\ref{alg:rec-two-reads} is $\cO(n\cdot N + n^3\cdot \varepsilon^{-6})$ as desired.
\end{proof}

\newpage

% \bibliographystyle{alpha}
% \bibliography{ref.bbl}
% Generated by IEEEtran.bst, version: 1.14 (2015/08/26)

\newpage 
\appendix
\combLemma*
\begin{proof}
For $g,h\in \mathbb{Z}_{+}$, we know that 
\begin{align*}
    \binom{g+h+1}{2} = \binom{g+1}{2} + \binom{h+1}{2} + gh.
\end{align*}
Therefore, 
\begin{align*}
    &\sum_{i=1}^{b-1}\left(\binom{1 + \sum_{j=0}^{i-1} a_j}{2} + \binom{1+a_i}{2} \right) \\
    &=  \sum_{i=1}^{b-1}\left(\binom{1 + \sum_{j=0}^{i} a_j}{2} - a_i  \left(\sum_{j=0}^{i-1} a_j\right)\right).
\end{align*}
Note that 
\begin{align*}
    \sum_{i=1}^{b-1}a_i\sum_{j=0}^{i-1}a_j = \sum_{i=0}^{b-2}a_i \left(c-\sum_{j=0}^{i}a_j\right).
\end{align*}
Hence, we have that $\sum_{i=0}^{b-1}\binom{a_i+1}{2}$ is 
\begin{align*}    
    &\binom{c+1}{2} - \sum_{i=1}^{b-1}a_i\sum_{j=0}^{i-1}a_j\\
    &= \binom{c+1}{2} - \frac{1}{2}\left(\sum_{i=1}^{b-1}a_i\sum_{j=0}^{i-1}a_j + \sum_{i=0}^{b-2}a_i \left(c-\sum_{j=0}^{i}a_j\right)\right)\\
    &= \binom{c+1}{2} - \frac{1}{2}\sum_{i=0}^{b-1}a_i(c-a_i). 
\end{align*}
\end{proof}
\end{document}